\documentclass[journal, onecolumn]{IEEEtran}

\usepackage{cite}
\usepackage{graphicx}
\usepackage{eucal}
\graphicspath{{./Images/}}

\usepackage[cmex10]{amsmath}
\usepackage[top=0.7in, bottom=0.7in, left=0.55in, right=0.55in]{geometry}

\usepackage[justification=centering]{caption}
\usepackage{float}
\usepackage{verbatim}
\usepackage{relsize}
\usepackage{amsfonts}
\usepackage{amsthm}
\usepackage{amssymb}
\usepackage{latexsym}
\usepackage{url}
\usepackage{epstopdf}
\usepackage{enumerate}
\usepackage{mathrsfs}
\usepackage{tabulary}
\usepackage{color}
\usepackage{balance}
\hyphenation{op-tical net-works semi-conduc-tor}

\newtheorem{theorem}{Theorem}
\newtheorem{lemma}[theorem]{Lemma}
\newtheorem{corollary}[theorem]{Corollary}
\newtheorem{proposition}[theorem]{Proposition}

\theoremstyle{definition}
\newtheorem{definition}[theorem]{Definition}
\newtheorem{example}{Example}
\newtheorem{remark}{Remark}

\newcommand{\bfM}{{\boldsymbol M}}
\newcommand{\bfK}{{\boldsymbol K}}
\newcommand{\bfL}{{\boldsymbol L}}
\newcommand{\bfO}{{\boldsymbol O}}

\newcommand{\bfX}{{\boldsymbol X}}
\newcommand{\bfY}{{\boldsymbol Y}}

\newcommand{\bfu}{{\boldsymbol u}}
\newcommand{\bfv}{{\boldsymbol v}}
\newcommand{\bfw}{{\boldsymbol w}}
\newcommand{\bfx}{{\boldsymbol x}}
\newcommand{\bfy}{{\boldsymbol y}}

\newcommand{\bfell}{{\boldsymbol \ell}}
\newcommand{\bfmu}{{\boldsymbol \mu}}
\newcommand{\one}{{\boldsymbol 1}}

\newcommand{\cC}{{\cal C}}
\newcommand{\cB}{{\cal B}}

\newcommand{\cP}{{\cal P}}
\newcommand{\cS}{{\cal S}}
\newcommand{\cT}{{\cal T}}

\newcommand{\sC}{{\mathcal C}}

\newcommand{\floor}[1]{\left\lfloor #1\right\rfloor}

\newcommand{\etal}{\emph{et al.}}
\newcommand{\ws}{w}

\IEEEoverridecommandlockouts

\begin{document}
\title{Generalized Sphere-Packing Bound for Subblock-Constrained Codes
}

\author{Han Mao Kiah\IEEEauthorrefmark{2}, Anshoo Tandon\IEEEauthorrefmark{1},
         and
        Mehul Motani\IEEEauthorrefmark{1}\\[1mm]
\IEEEauthorblockA{\IEEEauthorrefmark{2} \small School of Physical and Mathematical Sciences, 
Nanyang Technological University, Singapore\\[0mm]}
\IEEEauthorblockA{\IEEEauthorrefmark{1} \small Electrical \& Computer Engineering,
National University of Singapore\\[0mm]}
{Emails: hmkiah@ntu.edu.sg, anshoo.tandon@gmail.com, motani@nus.edu.sg}\vspace{-5mm}}

\maketitle

\begin{abstract}
We apply the generalized sphere-packing bound to two classes of subblock-constrained codes.
{\em \`A la} Fazeli \etal~(2015), we made use of automorphism to significantly reduce the number of variables in the associated linear programming problem. 
{In particular, we study binary \emph{constant subblock-composition codes} (CSCCs), characterized by the property that the number of ones in each subblock is constant, and binary \emph{subblock energy-constrained codes} (SECCs), characterized by the property that the number of ones in each subblock exceeds a certain threshold.}
For CSCCs, we show that the optimization problem is equivalent to finding the minimum of $N$ variables, where $N$ is independent of the number of subblocks.
We then provide closed-form solutions for the generalized sphere-packing bounds for single- and double-error correcting CSCCs.
For SECCs, we provide closed-form solutions for the generalized sphere-packing bounds for single errors in certain special cases. 
{We also obtain improved bounds on the optimal asymptotic rate for CSCCs and SECCs, and provide numerical examples to highlight the improvement.}
\end{abstract}

\section{Introduction}

Subblock-constrained codes are a class of constrained codes where each codeword is divided into smaller subblocks, and each subblock satisfies a certain application-dependent constraint. 
Subblock-constrained codes have recently gained attention as they are suitable candidates for applications such as simultaneous energy and information transfer~\cite{Tandon16_CSCC_TIT}, visible light communication~\cite{Zhao16}, low-cost authentication methods~\cite{Chee14_MCWC}, and powerline communications~\cite{Chee13}. 

In this paper, we discuss two important subclasses of subblock-constrained codes. 
The first subclass are the \emph{constant subblock-composition codes} (CSCCs). Binary CSCCs have varied applications~\cite{Zhao16,Chee14_MCWC,Tandon15_ISIT}, and are characterized by the property that each subblock in every codeword has the same \emph{weight}, i.e. each subblock has the same number of ones. 
The second subclass of subblock-constrained codes that we study are the \emph{subblock energy-constrained codes} (SECCs) which ensure that the energy content in every subblock of each codeword exceeds a certain threshold~\cite{Tandon16_CSCC_TIT}. SECCs have application in simultaneous energy and information transfer~\cite{Tandon16_CSCC_TIT}, and binary SECCs are characterized by the property that the number of ones in each subblock is at least $\ws$. Bounds on the capacity and error exponent for SECCs and CSCCs over noisy channels were presented in~\cite{Tandon16_CSCC_TIT}, while bounds on the SECC and CSCC code size and asymptotic rate, with minimum distance constraint, were analyzed in~\cite{TandonKM18_IT}.

In this paper, we study a modified version of the generalized sphere-packing bound 
{\em \`a la} Fazelli \etal \cite{Fazeli15}.
A specialized version of the generalized bound was first introduced by Kulkarni and Kiyavash \cite{Kulkarni2013} in the context of deletion-correcting code and since then, variants of their method were applied to a myriad of coding problems (see \cite{Fazeli15} for a survey). 
Fazeli {\em et al.} then studied their method in a general setup and provided what is called the \emph{generalized sphere-packing bound}.
Now, the generalized sphere-packing bound is essentially given by the optimal solution of a linear programming problem and 
in most cases, determining its exact value is difficult.
Nevertheless, we apply the symmetry techniques in \cite{Fazeli15} to significantly reduce the size of the linear program and 
our main contributions are the 
\emph{closed-form solutions} of the generalized sphere-packing bound in certain cases for CSCCs and SECCs.

{We also extend the results in~\cite{TandonKM18_IT} to present improved upper bounds on the asymptotic rates for CSCCs and SECCs for a range of relative distance values. These results are obtained by applying a generalized version of the sphere-packing bound (Sec.~\ref{sec:GSP}), and by judiciously choosing appropriate constrained spaces for estimating asymptotic ball sizes.}

\section{Generalized Sphere-Packing Bound} \label{sec:GSP}

We give a modified version of the sphere-packing bound in full generality, 
and then specialize it to the class of codes that we are interested in.

Let $\tau$ be a distance metric defined over $\Sigma^n$ and pick some constrained space $\cS \subseteq \Sigma^n$.
A subset $\sC\subseteq \cS$ is an $(n,d;\cS)$-code if $d=\min \{\tau(\bfx,\bfy) : \bfx,\bfy\in \sC,~\bfx\ne\bfy\}$
and we are interested in determining the value 
$A(n,d;\cS) \triangleq \max \{|\sC| : \mbox{ $\sC$ is an $(n,d;\cS)$-code}\}$.

Fix $d$ and set $t=\floor{(d-1)/2}$. For $\bfx\in \cS$, let $\cB(\bfx,t)$ be the ball $\{\bfy\in\Sigma^n: \tau(\bfx,\bfy)\le t\}$.
We further define $\cT\triangleq\bigcup_{\bfx \in \cS} \cB(\bfx,t)$.
In other words, $\cT$ is the set of all words whose distance is at most $t$ from some word in $\cS$.

We consider a binary matrix $\bfM$ whose rows are indexed by $\cS$ and columns are indexed by $\cT$.
Set 
\[ \bfM_{\bfx,\bfy}=
\begin{cases}
1 & \mbox{if $\tau(\bfx,\bfy)\le t$}, \\ 0 & \mbox{otherwise}.
\end{cases}\]

\begin{theorem}[Fazeli {\em et al.}~\cite{Fazeli15}, Cullina and Kiyavash\cite{CK16}]\label{thm:gsp}
For $d\le n$, set $t$ and $\bfM$ as above.
Then 
\begin{equation}\label{eq:gsp}
\small
A(n,d;\cS) \le \min \left\{\sum_{\bfy\in \cT} Y_\bfy : 
\bfM\bfY\ge \one, Y_\bfy \ge 0 \mbox{ for }\bfy\in \cT\right\}.
\end{equation}

\end{theorem}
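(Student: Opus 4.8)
The plan is to establish the bound via an LP-duality argument, showing that any $(n,d;\cS)$-code gives a feasible point of the dual of the linear program on the right-hand side of~\eqref{eq:gsp}, so that weak duality yields the inequality. First I would recall the packing interpretation: if $\sC$ is an $(n,d;\cS)$-code with $d \ge 2t+1$, then the balls $\{\cB(\bfx,t) : \bfx \in \sC\}$ are pairwise disjoint, since any $\bfy$ lying in both $\cB(\bfx,t)$ and $\cB(\bfx',t)$ would force $\tau(\bfx,\bfx') \le 2t \le d-1$ by the triangle inequality, contradicting $\bfx \ne \bfx'$ in $\sC$. Consequently the indicator vector $\bfX \in \{0,1\}^{\cS}$ of $\sC$ satisfies $\bfX^{\!\top}\bfM \le \one^{\!\top}$ (each column $\bfy \in \cT$ is covered by at most one ball centered in $\sC$), together with $X_\bfx \ge 0$, and $|\sC| = \one^{\!\top}\bfX = \sum_{\bfx\in\cS} X_\bfx$.

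Next I would write down the LP on the right of~\eqref{eq:gsp}, namely minimize $\one^{\!\top}\bfY$ subject to $\bfM\bfY \ge \one$, $\bfY \ge \zero$, and observe that its LP dual is: maximize $\one^{\!\top}\bfX$ subject to $\bfX^{\!\top}\bfM \le \one^{\!\top}$, $\bfX \ge \zero$ (with $\bfX$ indexed by $\cS$). By the previous paragraph, the indicator vector of $\sC$ is feasible for this dual program, so the dual optimum is at least $|\sC|$, and hence at least $A(n,d;\cS)$ after taking the maximum over all codes. By weak LP duality the primal optimum is $\ge$ the dual optimum, so the primal optimum is $\ge A(n,d;\cS)$, which is exactly~\eqref{eq:gsp}. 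One should note that the primal LP is feasible (since $\cT = \bigcup_{\bfx\in\cS}\cB(\bfx,t)$ means every column of $\bfM$ has at least one $1$ in a row indexed by that word's own center, so taking $Y_\bfy$ large enough satisfies $\bfM\bfY\ge\one$), so the minimum on the right-hand side is well defined (and finite).

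The only genuine subtlety — and the step I would treat most carefully — is making sure the fractional relaxation is handled correctly: we do \emph{not} claim the primal LP value equals $A(n,d;\cS)$, only that it upper-bounds it, and this is precisely because integral covers/packings sit inside their LP relaxations. Concretely, $A(n,d;\cS)$ is the optimum of an \emph{integer} program (maximize $\one^{\!\top}\bfX$ over $0/1$ vectors with $\bfX^{\!\top}\bfM \le \one^{\!\top}$), its LP relaxation has optimum $\ge A(n,d;\cS)$, and that relaxation's value equals the primal LP value in~\eqref{eq:gsp} by strong LP duality (both programs are feasible and the primal is bounded below by $0$). Chaining these gives the claim. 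If $\Sigma$, $\cS$, or $\cT$ were infinite one would need finiteness hypotheses to invoke strong duality and to keep the sums meaningful; here $d \le n$ and everything lives in $\Sigma^n$, so all index sets are finite and these concerns vanish. The remaining details — verifying the triangle inequality is all that is used about $\tau$, and that $t = \floor{(d-1)/2}$ indeed gives $2t \le d-1$ — are routine.
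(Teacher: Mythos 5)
Your proof is correct. The paper states Theorem~\ref{thm:gsp} as an imported result from \cite{Fazeli15} and \cite{CK16} without giving a proof, so there is nothing to compare against; your argument --- the balls of radius $t=\floor{(d-1)/2}$ about distinct codewords are disjoint by the triangle inequality, so the code's indicator vector is feasible for the dual (fractional matching) program and weak LP duality bounds the covering LP below by $|\sC|$ --- is exactly the standard derivation of the generalized sphere-packing bound. Two cosmetic points: only weak duality is needed, so the strong-duality discussion in your final paragraph is superfluous; and primal feasibility follows because every \emph{row} $\bfx$ of $\bfM$ has $\bfM_{\bfx,\bfx}=1$ (each $\bfx\in\cS$ lies in $\cT$), not from the column property you cite.
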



Usually, we consider spaces $\cS$ whose size is exponential in $n$.
Therefore, determining the exact value of \eqref{eq:gsp} by solving the linear program directly is computationally prohibitive.
Hence, most authors \cite{Fazeli15,CK16} chose certain feasible points in the linear program \eqref{eq:gsp} and 
evaluated the objective functions to obtain upper bounds on $A(n,d;\cS) $.
In particular, Cullina and Kiyavash \cite{CK16} introduced the local degree iterative algorithm to pick ``good'' feasible points.
In the following theorem, we picked feasible points by  judiciously choosing constrained spaces that contain $\cS$ and estimating the corresponding ball sizes.
This is motivated by Freiman's and Berger's methods \cite{Freiman64A,Berger67} that 
improve the usual sphere-packing bounds for constant weight codes. 

Formally, we choose $\tilde{\cS}\subseteq \Sigma^n$, a subset possibly different from $\cS$.
For $\bfx\in \cS$, define $\cB_{\tilde{\cS}}(\bfx,t)\triangleq \{\bfy\in\tilde{\cS}: \tau(\bfx,\bfy)\le t\}$ and 
set $V^{\min}_{\cS,\tilde{\cS}}(t)\triangleq \min \{|\cB_{\tilde{\cS}}(\bfx,t)|: \bfx\in \cS\}$.

\begin{theorem}\label{thm:ISITA}
Set $t=\floor{(d-1)/2}$. 
For any $\tilde{\cS}\subseteq \Sigma^n$,
if $V^{\min}_{\cS,\tilde{\cS}}(t)\ge 1$, then
\begin{equation}\label{eq:sp-1}
A(n,d;\cS)\le \frac{|\tilde{\cS}|}{V^{\min}_{\cS,\tilde{\cS}}(t)}.
\end{equation}
\end{theorem}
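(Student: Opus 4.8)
The strategy is to derive \eqref{eq:sp-1} by exhibiting a particular feasible point of the linear program in Theorem~\ref{thm:gsp} (with $\cT$ replaced by the larger index set $\tilde{\cS}$) and then invoking a simple counting/averaging argument. First I would observe that $\cS \subseteq \tilde{\cS}$ is \emph{not} actually needed: the only requirement is that every $\bfx\in\cS$ has at least one neighbour in $\tilde{\cS}$, which is exactly the hypothesis $V^{\min}_{\cS,\tilde{\cS}}(t)\ge 1$. So I would set up a modified covering matrix $\tilde{\bfM}$ with rows indexed by $\cS$ and columns indexed by $\tilde{\cS}$, with $\tilde{\bfM}_{\bfx,\bfy}=1$ iff $\tau(\bfx,\bfy)\le t$; the same argument proving Theorem~\ref{thm:gsp} then gives $A(n,d;\cS)\le \min\{\sum_{\bfy\in\tilde{\cS}} Y_\bfy : \tilde{\bfM}\bfY\ge\one,\ Y_\bfy\ge 0\}$, since a $t$-error-correcting code's balls are disjoint and the restriction of those balls to $\tilde{\cS}$ is still a disjoint family.

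Next I would plug in the uniform feasible point $Y_\bfy = 1/V^{\min}_{\cS,\tilde{\cS}}(t)$ for all $\bfy\in\tilde{\cS}$. To check feasibility I must verify $\tilde{\bfM}\bfY\ge\one$: the $\bfx$-th entry of $\tilde{\bfM}\bfY$ equals $|\cB_{\tilde{\cS}}(\bfx,t)|/V^{\min}_{\cS,\tilde{\cS}}(t)\ge 1$ by the definition of $V^{\min}_{\cS,\tilde{\cS}}(t)$ as the minimum over $\bfx\in\cS$, and nonnegativity is immediate once $V^{\min}_{\cS,\tilde{\cS}}(t)\ge 1>0$. The objective value at this point is $\sum_{\bfy\in\tilde{\cS}} 1/V^{\min}_{\cS,\tilde{\cS}}(t) = |\tilde{\cS}|/V^{\min}_{\cS,\tilde{\cS}}(t)$, which yields \eqref{eq:sp-1}.

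Alternatively — and this is really the same argument without the LP wrapper — I would argue directly: for an optimal $(n,d;\cS)$-code $\sC$, the balls $\cB(\bfx,t)$, $\bfx\in\sC$, are pairwise disjoint (since $d\ge 2t+1$), hence so are their intersections with $\tilde{\cS}$, namely the sets $\cB_{\tilde{\cS}}(\bfx,t)$. Therefore $\sum_{\bfx\in\sC}|\cB_{\tilde{\cS}}(\bfx,t)| \le |\tilde{\cS}|$, and since each summand is at least $V^{\min}_{\cS,\tilde{\cS}}(t)$ we get $|\sC|\cdot V^{\min}_{\cS,\tilde{\cS}}(t)\le|\tilde{\cS}|$, i.e. $A(n,d;\cS)=|\sC|\le |\tilde{\cS}|/V^{\min}_{\cS,\tilde{\cS}}(t)$.

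There is no real obstacle here — the statement is essentially a one-line generalization of the classical sphere-packing bound where the ambient counting space has been shrunk from $\Sigma^n$ to $\tilde{\cS}$. The only point requiring a modicum of care is making sure that the balls around distinct codewords, once intersected with $\tilde{\cS}$, remain disjoint; but this is immediate because disjointness of sets is inherited by intersecting each with a common set. The value of the theorem lies entirely in the \emph{choice} of $\tilde{\cS}$ downstream (so that $|\tilde{\cS}|$ is small while $V^{\min}_{\cS,\tilde{\cS}}(t)$ stays large), not in the proof itself.
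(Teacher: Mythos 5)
Your proposal is correct and its main argument is essentially the paper's proof: both plug the uniform feasible point $Y_\bfy = 1/V^{\min}_{\cS,\tilde{\cS}}(t)$ supported on $\tilde{\cS}$ into the linear program of Theorem~\ref{thm:gsp} and evaluate the objective. The only cosmetic difference is that the paper keeps the columns indexed by $\cT$ and uses the fact that $\cB_{\tilde{\cS}}(\bfx,t)\subseteq\cT$ for every $\bfx\in\cS$ (so $\cB_{\tilde{\cS}\cap\cT}(\bfx,t)=\cB_{\tilde{\cS}}(\bfx,t)$), whereas you re-index the columns by $\tilde{\cS}$; your closing direct disjoint-balls counting argument is also valid and yields the same bound without the LP wrapper.
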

\begin{IEEEproof}
	Abbreviate $V^{\min}_{\cS,\tilde{\cS}}(t)$ with $V$ and consider the vector $\bfY$ with entries $Y_\bfy$ defined as
	\[Y_\bfy  =
	\begin{cases}
	1/V & \mbox{if $\bfy \in \tilde{\cS}$}, \\ 0 & \mbox{otherwise}.
	\end{cases}\]
	We  show that vector $\bfY$ above is a feasible point in the optimization program \eqref{eq:gsp}.
	In other words, we claim that $\bfM\bfY\ge \one$. Indeed, for $\bfx\in \cS$, 
	let $\bfM_\bfx$ denote the row of $\bfM$ that corresponds to $\bfx$. We have that 
	\[ \bfM_\bfx\bfY= |\cB_{\tilde{\cS}\cap \cT}(\bfx,t)|/V= |\cB_{\tilde{\cS}}(\bfx,t)|/V \ge 1,\]
	since $V$ corresponds to the smallest ball volume. To complete the proof, it remains to compute the objective value that is
	$\sum_{\bfy\in \cT} Y_\bfy=\sum_{\bfy\in \tilde{\cS}\cap \cT} Y_\bfy \le |\tilde{\cS}|/V$.
\end{IEEEproof}
Observe that there are exponentially many choices for the constrained space $\tilde{\cS}$.
Nevertheless, in this paper, for the class of CSCCs,  we provide a {\em small family of constrained spaces} and 
show that the {\em optimal solution to \eqref{eq:gsp} must be correspond to one of these constrained spaces}. 
Furthermore, the number of these constrained spaces depends only on $t$ and is {\em independent of the length} $n$.

Another approach to make the linear program \eqref{eq:gsp} tractable exploits symmetries in the linear program.
The approach essentially uses the symmetric structure of the linear program \eqref{eq:gsp}
to define another linear program that has significantly lesser variables and constraints, 
while ensuring the solution to the latter program is the same as the former. 
The method is formally summarised in Theorem \ref{thm:reduction} and 
we remark that similar methods  can be found in linear programming literature (see Margot \cite{Margot:2003}, and B{\H o}di and Herr~\cite{Bodi:2009}). 
Independently, Fazelli {\em et al.} obtained Theorem \ref{thm:reduction} in the specialized setting of finding a fractional transversal in hypergraphs and applied it to various coding problems.
Here, we describe the method in the language of metric spaces.

Recall that $\tau$ is a distance metric defined over $\Sigma^n$.
We say that the permutations $\pi:\Sigma^n\to\Sigma^n$ is an automorphism with respect to $\tau$
if for all $\bfx,\bfy\in \Sigma^n$, we have that $\tau(\bfx,\bfy)=\tau(\pi(\bfx),\pi(\bfy))$.
It is known that the set $G$ of all automorphisms with respect to $\tau$ form a 
subgroup of the symmetric group on the set $\Sigma^n$.

Consider ${\cal X} \subset \Sigma^n$.
A subgroup $H$ of $G$ is {\em $H$-closed} if $\pi(\bfx)\in {\cal X}$ for all $\pi\in H$ and $\bfx\in {\cal X}$.
Recall that $\cT$ is subset of $\Sigma^n$ defined by the constrained space $\cS$ and radius $t$.
Suppose that $\cT$ is $H$-closed and 
let $O_1,O_2,\ldots, O_{N}$ be the orbits under the group action of $H$.

\begin{theorem}[Fazeli \etal \cite{Fazeli15}]\label{thm:reduction}
Given $n$, $d$, $\cS$, we define $t$, $\cT$ and $O_1, O_2,\ldots, O_N$ as above. Then
{
\begin{equation}
A(n,d;\cS) \le \min \left\{\sum_{j\in [N]} Y^*_j : \bfM^*\bfY^*\ge \one, Y^*_j \ge 0 \mbox{ for } j\in [N]\right\}.
\label{eq:reduced-gsp}
\end{equation}
}

\noindent where $\bfM^*$ is a matrix whose rows are indexed by $\cS$, 
columns are indexed by $[N]$, and 
its entries are given by 
\[\bfM^*(\bfx,j)= \sum_{\bfy\in O_j} \bfM(\bfx,\bfy) \mbox{ for all $\bfx\in\cS, \, j\in [N]$}.\]
Note that $\bfM^*(\bfx,j)$ is also the size of the ball $\cB_{O_j}(\bfx,t)$.
\end{theorem}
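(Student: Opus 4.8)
The plan is to reduce the linear program \eqref{eq:gsp} to the smaller program \eqref{eq:reduced-gsp} by exploiting the $H$-symmetry, using an averaging (Reynolds-operator) argument. The key observation is that the group $H$ acts on both $\cS$ (as rows, via $\bfx \mapsto \pi(\bfx)$) and $\cT$ (as columns), and the matrix $\bfM$ is invariant under this simultaneous action: since every $\pi \in H$ is an automorphism with respect to $\tau$, we have $\tau(\bfx,\bfy) \le t$ if and only if $\tau(\pi(\bfx),\pi(\bfy)) \le t$, so $\bfM(\bfx,\bfy) = \bfM(\pi(\bfx),\pi(\bfy))$. (One must first check $\cS$ is also $H$-closed so that the row action is well-defined; since $\cT \supseteq \cS$ is assumed $H$-closed and $\cS = \{\bfx : \cB(\bfx,t) \ne \emptyset \text{ within } \cS\}$... actually $\cS$ itself should be taken $H$-closed, and $\cT$ inherits this — I would state this hypothesis cleanly.)

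First I would show the optimal value of \eqref{eq:gsp} is unchanged if we restrict to $H$-invariant feasible vectors, i.e. vectors $\bfY$ with $Y_{\pi(\bfy)} = Y_\bfy$ for all $\pi \in H$. Given any feasible $\bfY$, form the average $\bar{\bfY} \triangleq \frac{1}{|H|}\sum_{\pi \in H} \pi \cdot \bfY$, where $(\pi\cdot\bfY)_\bfy = Y_{\pi^{-1}(\bfy)}$. Nonnegativity is preserved under averaging. For feasibility $\bfM\bar{\bfY} \ge \one$: fix a row $\bfx$; then $(\bfM(\pi\cdot\bfY))_\bfx = \sum_\bfy \bfM(\bfx,\bfy) Y_{\pi^{-1}(\bfy)} = \sum_{\bfz} \bfM(\bfx,\pi(\bfz)) Y_\bfz = \sum_{\bfz}\bfM(\pi^{-1}(\bfx),\bfz)Y_\bfz \ge 1$ using $\bfM$-invariance, so the average of these is also $\ge 1$. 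The objective is also preserved: $\sum_\bfy \bar{Y}_\bfy = \sum_\bfy Y_\bfy$ since averaging just permutes the summands. Hence the optimum over all $\bfY$ equals the optimum over $H$-invariant $\bfY$.

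Next I would observe that an $H$-invariant vector $\bfY$ is constant on each orbit $O_j$, so it is parametrized by $N$ values $Y^*_j$ (the common value on $O_j$). Substituting, the objective becomes $\sum_\bfy Y_\bfy = \sum_{j\in[N]} |O_j|\, Y^*_j$ — hmm, this produces a weighting $|O_j|$, so to match \eqref{eq:reduced-gsp} exactly one should rescale variables by setting $Y^*_j$ to be the orbit-sum $|O_j|$ times the common value, or equivalently keep the convention that $Y^*_j = \sum_{\bfy \in O_j} Y_\bfy$. With the latter convention the objective is literally $\sum_j Y^*_j$, and the constraint row $\bfx$ reads $\sum_\bfy \bfM(\bfx,\bfy)Y_\bfy = \sum_j \bfM(\bfx,\bfy)\big|_{\bfy\in O_j} \cdot \frac{Y^*_j}{|O_j|}\cdot|O_j|$... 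I would be careful here: since $Y_\bfy$ is constant $= Y^*_j/|O_j|$ on $O_j$, $\sum_{\bfy\in O_j}\bfM(\bfx,\bfy)Y_\bfy = \big(\sum_{\bfy\in O_j}\bfM(\bfx,\bfy)\big)\cdot\frac{Y^*_j}{|O_j|}$. For this to equal $\bfM^*(\bfx,j) Y^*_j$ with $\bfM^*(\bfx,j) = \sum_{\bfy\in O_j}\bfM(\bfx,\bfy)$ one needs... it doesn't match unless $|O_j|=1$. The resolution (and the thing to state carefully) is that $\bfM^*$'s rows are indexed by $\cS$ but really only one representative per \emph{row-orbit} is needed, and the correct identity is $\sum_{\bfy\in O_j}\bfM(\bfx,\bfy) Y_\bfy = \frac{Y^*_j}{|O_j|}\bfM^*(\bfx,j)$; redefining the free variable as $Z_j \triangleq Y^*_j/|O_j|$ gives constraint $\bfM^* \mathbf{Z} \ge \one$ and objective $\sum_j |O_j| Z_j$. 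I would reconcile these conventions with the paper's statement — most likely the paper intends $Y^*_j$ to be the per-element value and the objective in \eqref{eq:reduced-gsp} should carry $|O_j|$ weights, or $\bfM^*$ absorbs a $1/|O_j|$; I would match whichever convention \eqref{eq:reduced-gsp} actually uses and verify the bookkeeping.

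The main obstacle is exactly this bookkeeping: getting the orbit-averaging, the rescaling of variables, and the definition of $\bfM^*$ to line up so that the reduced program is \emph{equivalent} (same optimal value), not merely a relaxation. The inequality direction $A(n,d;\cS) \le (\text{value of }\eqref{eq:reduced-gsp})$ that the theorem asserts is the easy half — it follows from Theorem~\ref{thm:gsp} together with the fact that every feasible point of the reduced program lifts to a feasible point of \eqref{eq:gsp} with equal objective (spread $Z_j$ uniformly over $O_j$). So strictly, to prove the stated inequality it suffices to establish this lifting, which only uses $\bfM$-invariance on each orbit and requires no averaging argument at all; the averaging argument is needed only for the sharper claim that nothing is lost. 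I would present the lifting proof as the core of the argument and include the averaging argument to justify tightness.
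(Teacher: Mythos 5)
Your argument is correct, and in fact the paper does not prove this theorem at all --- it is quoted from Fazeli \emph{et al.}, so there is no internal proof to compare against. The symmetrization route you take (lift a feasible point of the reduced program to an orbit-constant feasible point of \eqref{eq:gsp} for the inequality, and average over $H$ via the Reynolds operator to show no tightness is lost) is the standard argument and exactly the one used in the cited source. Your computation of feasibility under averaging, including the need for $\cS$ to be $H$-closed so that the row $\pi^{-1}(\bfx)$ exists (a hypothesis the theorem statement omits but which holds for $\cC(m,L,\ws)$ and $\cS(m,L,\ws)$), is right.

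More importantly, the ``bookkeeping'' issue you flagged is not a defect of your proof but a genuine typo in the statement as printed: with $\bfM^*(\bfx,j)=\sum_{\bfy\in O_j}\bfM(\bfx,\bfy)$ and $Y^*_j$ interpreted as the common per-element value on $O_j$, the objective must be $\sum_{j\in[N]}|O_j|\,Y^*_j$, not $\sum_{j\in[N]}Y^*_j$. This is confirmed by how the theorem is actually applied later in the paper --- the objectives in \eqref{gsp:CSCC} and \eqref{gsp:SECC} carry the weights $|O_\bfu|$, and Proposition~\ref{prop:cscc} returns $\min_\bfu |O_\bfu|/\bfM^*_{\bfw,\bfu}$, which is the optimum of the weighted program. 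As literally written, \eqref{eq:reduced-gsp} would be false (e.g.\ for CSCCs with $t=1$ it would assert $C(m,L,4,\ws)\le 1/(m\ws)<1$). Your resolution --- either put $|O_j|$ in the objective or absorb $1/|O_j|$ into $\bfM^*$ --- is the correct fix, and your proof establishes the corrected statement, including the sharper claim that the reduced program has the \emph{same} optimal value as \eqref{eq:gsp}.
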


\subsection{Subblock-Constrained Codes}

In this paper, we focus on the Hamming metric and the following two constrained spaces.
Our constrained spaces are parameterized by integers $m$, $L$ and $\ws$ with $L/2\le \ws\le L$. 
We consider words of $n\triangleq mL$, 
where each word is partitioned into $m$ subblocks, each of length $L$.
The \emph{constant subblock-composition codes} (CSCC) space is the 
 space of all words that have constant weight $\ws$ in each subblock and is denoted by $\cC(m,L,\ws)$.
 On the other hand, the \emph{subblock energy-constrained codes} (SECC) space is the
 space of all words that have weight \emph{at least} $\ws$ in each subblock and is denoted by $\cS(m,L,\ws)$.
The quantities of interest are hence
\begin{align*}
C(m,L,d,\ws) &\triangleq A\left(mL,d;\cC(m,L,\ws)\right),\\
S(m,L,d,\ws) &\triangleq A\left(mL,d;\cS(m,L,\ws)\right).
\end{align*}
Our contributions are as follow:
\begin{enumerate}[(A)]
\item  We provide {\em exact} solutions to the optimization problem given by \eqref{eq:gsp}.
\begin{itemize}
\item  For CSCCs, we show that \eqref{eq:gsp} can computed by finding the minimum amongst a set of at most $N(t)$ values.
We demonstrate that $N(t)$ is independent of $m$ and $L$ and provide a combinatorial interpretation of this value in Section~\ref{sec:auto}. 
Furthermore, each of these $N(t)$ values corresponds to choice of constrained space $\tilde{\cS}$ in Theorem~\ref{thm:ISITA}.
Using this fact, we provide closed-form solutions for the case $t\in\{1,2\}$.
\item For SECCs, we show that \eqref{eq:gsp} can computed by solving a linear program in at most $L^mN(t)$ variables. 
For $t=1$, we provide closed-form solutions for \eqref{eq:gsp} in the following cases: (i) when $m=1$ and $\ws\ge L/2$; (ii) when $\ws=L-1$ and $L\ge m/2$.
\end{itemize}
\item When $d$, or equivalently, $t$, grows linearly with $n$,  the reduced optimization problem remains intractable.
Nevertheless, we provide {\em estimates} to the optimization problem by judiciously choosing appropriate constrained spaces for estimating asymptotic ball sizes and subsquently, applying the generalized sphere-packing bound.
Doing so, we extend the results in~\cite{TandonKM18_IT} to present improved upper bounds on the asymptotic rates for CSCCs and SECCs for a range of relative distance values. 
\end{enumerate}

%

\section{Closed-Form Solutions for \eqref{eq:gsp} for Subblock-Constrained Codes}\label{sec:auto}

For words of length $mL$, we consider the following subgroup of automorphisms 
on $\{0,1\}^{mL}$ with respect to the Hamming metric.
Let $H$ be the set of automorphisms that permute the $m$ subblocks and 
then permute the $L$ coordinates within each block. 
Let $H$ act on the words $\{0,1\}^{mL}$.
Then under this group action, we can index each orbit with an $m$-tuple in
\[\cP(m,L)\triangleq \{[u_1,u_2,\ldots,u_m]: u_1\ge \cdots\ge u_m, 0\le u_i\le L\}.\]
\noindent In other words, if we pick a word in $O_{[u_1,u_2,\ldots,u_m]}$,
$[u_1,u_2,\ldots, u_m]$ is the $m$-tuple obtained by taking the $m$ weights of subblocks
and arranging them in non-increasing order.

\begin{example}Consider $m=2$ and $L=2$.
Under the group action of $H$, the orbits are
\begin{align*}
O_{[0,0]}&=\{0000\},\\
O_{[1,0]}&=\{0001,0010,0100,1000\},\\
O_{[1,1]}&=\{0101, 0110, 1001,1010\},\\
O_{[2,0]}&=\{0011,1100\},\\
O_{[2,1]}&=\{1101,1110,0111,1011\},\\
O_{[2,2]}&=\{1111\}.
\end{align*}
\end{example}
\vspace{2mm}

Notice that 
\begin{align*}
\cC(m,L,\ws) &= O_{[\ws,\ws,\ldots, \ws]},\\ 
\cS(m,L,\ws) &= \bigcup_{u_1\ge u_2\ge \cdots \ge u_m\ge  \ws} O_{[u_1,u_2,\ldots, u_m]}.
\end{align*}
Therefore, both $\cC(m,L,\ws)$ and $\cS(m,L,\ws)$ are $H$-closed.

To apply Theorem~\ref{thm:reduction}, we have
to compute the orbit sizes and determine the number of orbits under the action of $H$.

First, we determine the number of words in the orbit corresponding to some
$m$-tuple $\bfu=[u_1,u_2,\ldots, u_m]$.
To do so, we introduce the following notation for a multinomial.
Let $\bfmu=(\mu_1,\mu_2,\ldots, \mu_\ell)$ be a vector of length $\ell$ such that 
$\sum_{i=1}^\ell \mu_i=m$.
We write
\[ \binom{m}{\bfmu}\triangleq\binom{m}{\mu_1,\mu_2,\ldots, \mu_\ell}=\frac{m!}{\mu_1!\mu_2!\cdots\mu_\ell!}.\]

For $\bfu\in\cP(m,L)$, set 
\[N(\bfu)\triangleq \binom{m}{\bfmu}, \mbox{ where }\mu_j=\#\{i:u_i=j\} \mbox{ for $0\le j\le L$}.\]

Then the size of $O_\bfu$ is given by
\[ |O_\bfu|=N(\bfu)\prod_{i=1}^m \binom{L}{u_i}. \]

To determine the number of orbits, we look at the respective classes of subblock-constrained codes.

\subsection{Constant Subblock-Composition Codes}

First, we consider the space ${\cS}=\cC(m,L,\ws)$ and set $t=\floor{(d-1)/2}$.
Let $\cT$ be the resulting space and our task is to determine the number of orbits that are contained in $\cT$.
Now, the orbits in $\cT$ are indexed by the $m$-tuples in the 
set $\cP(m,L;\ws,t)\triangleq \{\bfu\in\cP(m,L): \sum_{i=1}^m |u_i-\ws|\le t\}$.

Hence, our task is to determine the size of $\cP(m,L;\ws,t)$.
The next proposition states that this number is upper bounded by a value independent of $m$, $L$, and $\ws$.
Its proof is combinatorial in nature and we defer it to Appendix~\ref{app:Nt}.

\begin{proposition}\label{prop:Nt} For all $m, L, \ws, t$, we have $|\cP(m,L;\ws,t)| \le N(t)$, where
\begin{equation}\label{eq:Nt}
N(t)=\sum_{r=0}^t \sum_{i=0}^r p(i)p(r-i),
\end{equation}
\noindent and $p(i)$ is the partition number of $i$. 
Furthermore, we have equality, or $|\cP(m,L;\ws,t)| = N(t)$, whenever
$m\ge t$ and $t\le \min\{\ws,L-\ws\}$.
\end{proposition}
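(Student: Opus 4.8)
The plan is to set up a bijection between the tuples in $\cP(m,L;\ws,t)$ and a combinatorial object counted by $N(t)$, namely pairs of partitions whose sizes sum to at most $t$. Given $\bfu=[u_1,\ldots,u_m]\in\cP(m,L;\ws,t)$, I would split the coordinates according to whether $u_i\ge\ws$ or $u_i<\ws$. From the ``up'' coordinates I would read off the multiset $\{u_i-\ws : u_i>\ws\}$, which I claim is a partition $\lambda$ of some integer $a\le t$; from the ``down'' coordinates I would read off the multiset $\{\ws-u_i : u_i<\ws\}$, a partition $\mu$ of some integer $b$ with $a+b\le t$ (this is exactly the constraint $\sum_i|u_i-\ws|\le t$). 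So $\bfu\mapsto(\lambda,\mu)$, and the number of such pairs with $|\lambda|+|\mu|=r$ is $\sum_{i=0}^r p(i)p(r-i)$; summing over $r=0,\ldots,t$ gives $N(t)$. This immediately yields the inequality $|\cP(m,L;\ws,t)|\le N(t)$, since distinct tuples in $\cP(m,L)$ (which are sorted, hence determined by their multiset of entries) map to distinct pairs, but the map need not be surjective.

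Next I would address surjectivity, i.e.\ the equality case. Given a pair $(\lambda,\mu)$ with $|\lambda|+|\mu|\le t$, I need to build a valid sorted tuple $\bfu$. Write $\lambda=(\lambda_1\ge\cdots\ge\lambda_k)$ and $\mu=(\mu_1\ge\cdots\ge\mu_\ell)$. The candidate multiset of subblock weights is: the values $\ws+\lambda_1,\ldots,\ws+\lambda_k$, then a block of copies of $\ws$, then the values $\ws-\mu_1,\ldots,\ws-\mu_\ell$, padded out to length $m$ with entries equal to $\ws$. For this to lie in $\cP(m,L;\ws,t)$ I need three things: (i) there are enough subblocks, i.e.\ $k+\ell\le m$; (ii) the top entries do not exceed $L$, i.e.\ $\ws+\lambda_1\le L$; (iii) the bottom entries are nonnegative, i.e.\ $\ws-\mu_1\ge 0$. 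Since $k\le|\lambda|\le t$ and $\ell\le|\mu|\le t$ but also $k+\ell\le|\lambda|+|\mu|\le t$ (as each part is at least $1$), condition~(i) follows from $m\ge t$. Since $\lambda_1\le|\lambda|\le t\le L-\ws$, condition~(ii) holds; since $\mu_1\le|\mu|\le t\le\ws$, condition~(iii) holds. Hence under the stated hypotheses the map is a bijection and $|\cP(m,L;\ws,t)|=N(t)$.

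The main obstacle, and the part requiring the most care, is not the bijection itself but verifying that the hypotheses $m\ge t$ and $t\le\min\{\ws,L-\ws\}$ are exactly what is needed for surjectivity, and in particular pinning down the sharp bound $k+\ell\le t$ rather than the weaker $k+\ell\le 2t$: this uses that every part of $\lambda$ and of $\mu$ is a positive integer, so the number of parts is at most the size, and the two sizes together are at most $t$. I would also need to double-check the $r=0$ term (the all-$\ws$ tuple, corresponding to the pair of empty partitions, with $p(0)=1$) so that the formula $N(t)=\sum_{r=0}^t\sum_{i=0}^r p(i)p(r-i)$ has the right normalization. Everything else — that sorted tuples correspond bijectively to multisets, and that the ``up/down'' decomposition is well defined — is routine. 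I expect this argument to be what the authors carry out in Appendix~\ref{app:Nt}.
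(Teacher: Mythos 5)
Your proof is correct and follows essentially the same route as the paper's Appendix~\ref{app:Nt}: the map sending $\bfu$ to the pair of partitions formed by the positive and negative deviations $u_i-\ws$ is exactly the paper's injection $\phi$ into $\bigcup_{r=0}^t\Lambda(r)$, the set of partitions of $r\le t$ into parts of two kinds, and your inverse construction for the equality case matches theirs. If anything, you are slightly more careful than the paper's Lemma~\ref{lem:app}, which asserts bijectivity assuming only $m\ge t$ and leaves the checks $\ws+\lambda_1\le L$ and $\ws-\mu_1\ge 0$ (i.e.\ the role of $t\le\min\{\ws,L-\ws\}$) implicit, whereas you verify them explicitly.
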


Next, we set $\bfw\triangleq [\ws,\ws,\ldots, \ws]$ and observe that $\cC(m,L,\ws)=O_{\bfw}$. 
Applying Theorem~\ref{thm:reduction}, we reduce \eqref{eq:gsp} to the following optimization program.

\begin{equation}\label{gsp:CSCC}
\min \left\{\sum_{\bfu\in \cP(m,L;\ws,t)} |O_\bfu| Y^*_\bfu : \bfM^*\bfY^*\ge \one, Y_\bfu \ge 0 \right\},
\end{equation}
\noindent $\bfM^*$ is the matrix indexed by $\{\bfw\}\times \cP(m,L;\ws,t)$
whose entries are given by 
\[ \bfM^*_{\bfw,\bfu}\triangleq \big|\, B_{O_{\bfu}}((0^{L-\ws}1^{\ws})^m,t)\, \big|.\]

\begin{example}Consider $m=3$, $L=10$, $\ws=5$ and $t=2$.
Then 
\begin{align*}
\cP(m,L;\ws,t)&=\cP(3,10;5,2)\\
&=\{[5,5,5], [6,5,5], [5,5,4],[7,5,5],[6,6,5],[6,5,4],[5,4,4], [5,5,3]\}.
\end{align*}
Then $\bfM^*$ is given by
\[ \bfM^*=\left(
\begin{array}{cccc cccc}
76 & 15 & 15 & 30 & 75 & 150 & 75 & 30
\end{array}
\right),
\]
and the objective function is given by
{\small
\begin{align*}
&252^3Y_{555} +3(210)(252)^2 Y_{655} +3(210)(252)^2 Y_{554}\\
&+3(120)(252)^2Y_{755} +3(252)(210)^2Y_{665}  +6(252)(210)^2 Y_{654} \\
&+3(252)(210)^2 Y_{544} +3(120)(252)^2Y_{553}
\end{align*}
}\noindent Notice that there are eight variables and the feasible region is described by nine constraints (including the eight nonnegative constraints). 
Hence, each vertex of the feasible region has exactly seven zero components.
Therefore, the optimal value of the linear program is given by
{\small 
\begin{align*}
\min &\Big\{
\frac{252^3}{76},
\frac{3(210)(252)^2}{15} ,
\frac{3(210)(252)^2}{15} ,
\frac{3(120)(252)^2}{30} ,
\frac{3(252)(210)^2}{75}, 
\frac{6(252)(210)^2}{150},
\frac{3(252)(210)^2}{75},
\frac{3(120)(252)^2}{30}
\Big\}\\
&=210565.894.
\end{align*}
}
\end{example}
\vspace{2mm}

More generally, we provide a closed formula for the exact solution to \eqref{gsp:CSCC}, or equivalently \eqref{eq:gsp}.

\begin{proposition}\label{prop:cscc}
For all $m,L,\ws,t$, the exact solution to \eqref{eq:gsp} for CSCCs is given by  
\begin{equation}\label{sol:cscc}
\min\left\{\frac{|O_\bfu|}{\bfM^*_{\bfw,\bfu}}: \bfu\in \cP(m,L;\ws,t) \right\}.
\end{equation}
Furthermore, since $|\cP(m,L;\ws,t)|\le N(t)$, the solution \eqref{sol:cscc} can be computed in time independent of $m$, $L$, and $\ws$.
\end{proposition}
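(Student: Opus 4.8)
The plan is to show that the linear program \eqref{gsp:CSCC} is so highly structured that its optimum is attained at a vertex supported on a single coordinate, which immediately gives the formula \eqref{sol:cscc}. The key observation is that the reduced matrix $\bfM^*$ for CSCCs has exactly \emph{one} row, namely the row indexed by $\bfw=[\ws,\ws,\ldots,\ws]$, since $\cC(m,L,\ws)=O_{\bfw}$ is a single orbit. So \eqref{gsp:CSCC} is a linear program in the $|\cP(m,L;\ws,t)|$ variables $Y^*_\bfu$ with a single nontrivial inequality constraint $\sum_{\bfu} \bfM^*_{\bfw,\bfu} Y^*_\bfu \ge 1$ together with the nonnegativity constraints $Y^*_\bfu\ge 0$.

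First I would set up this one-constraint LP explicitly: minimize $\sum_\bfu |O_\bfu|\,Y^*_\bfu$ subject to $\sum_\bfu \bfM^*_{\bfw,\bfu}\,Y^*_\bfu \ge 1$ and $Y^*_\bfu\ge 0$. Note that every entry $\bfM^*_{\bfw,\bfu}=|B_{O_\bfu}((0^{L-\ws}1^{\ws})^m,t)|$ is strictly positive for $\bfu\in\cP(m,L;\ws,t)$ (by definition of $\cT$, each such orbit contains a word within distance $t$ of the chosen center, and by Theorem~\ref{thm:reduction} the ball size equals the matrix entry), so the constraint is genuinely active at the optimum and feasibility with value $\ge 1$ is achievable. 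Then I would argue that the optimum equals the minimum over $\bfu$ of $|O_\bfu|/\bfM^*_{\bfw,\bfu}$: on the one hand, for each fixed $\bfu_0$ the point that puts $Y^*_{\bfu_0}=1/\bfM^*_{\bfw,\bfu_0}$ and all other coordinates $0$ is feasible with objective value $|O_{\bfu_0}|/\bfM^*_{\bfw,\bfu_0}$, so the optimum is at most the claimed minimum; on the other hand, for any feasible $\bfY^*$, writing $c=\min_\bfu |O_\bfu|/\bfM^*_{\bfw,\bfu}$ we have $|O_\bfu|\ge c\,\bfM^*_{\bfw,\bfu}$ for every $\bfu$, hence $\sum_\bfu |O_\bfu| Y^*_\bfu \ge c\sum_\bfu \bfM^*_{\bfw,\bfu} Y^*_\bfu \ge c$. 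This is just the standard fact that a single-constraint LP (a ``fractional covering'' problem with one set) is solved greedily by the best cost-to-coverage ratio; I would present it in two or three lines rather than invoking vertex enumeration. The final sentence follows because Proposition~\ref{prop:Nt} bounds $|\cP(m,L;\ws,t)|$ by $N(t)$, a quantity depending only on $t$, and the orbit sizes $|O_\bfu|$ and ball sizes $\bfM^*_{\bfw,\bfu}$ are computed by the explicit formulas given above (a product of at most $m$ binomials and a multinomial), so the whole minimum is evaluated in time depending only on $t$ once we observe that the $m$-tuples in $\cP(m,L;\ws,t)$ differ from $\bfw$ in at most $t$ coordinates and those coordinates are determined by a pair of integer partitions of total size at most $t$.

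The only real subtlety — and the ``hard part'' — is verifying that the equality with \eqref{eq:gsp} (not just with the reduced program \eqref{gsp:CSCC}) is legitimate, i.e. that Theorem~\ref{thm:reduction} applies: one must check that $\cT$ is $H$-closed, which follows because the Hamming distance is invariant under the subblock-and-coordinate permutations in $H$ and $\cC(m,L,\ws)$ is itself $H$-closed, so $\cT=\bigcup_{\bfx\in\cC}\cB(\bfx,t)$ is a union of $H$-orbits. Everything else is bookkeeping with the orbit-size formula, and the combinatorial bound on $|\cP(m,L;\ws,t)|$ is already delegated to Proposition~\ref{prop:Nt}.
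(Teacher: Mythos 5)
Your proposal is correct and matches the paper's reasoning: the paper never writes a formal proof of this proposition, but justifies it in the preceding example by noting that the reduced LP has a single covering constraint plus nonnegativity constraints, so every vertex has at most one nonzero coordinate and the optimum is the best ratio $|O_\bfu|/\bfM^*_{\bfw,\bfu}$. Your direct two-sided argument (each single-coordinate point is feasible, and $\sum_\bfu |O_\bfu| Y^*_\bfu \ge c\sum_\bfu \bfM^*_{\bfw,\bfu} Y^*_\bfu \ge c$) is an equivalent and slightly cleaner formalization of the same idea, and your checks that $\bfM^*_{\bfw,\bfu}>0$ and that $\cT$ is $H$-closed fill in details the paper leaves implicit.
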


\begin{remark}\hfill
\begin{enumerate}[(i)]
\item For $\bfu\in \cP(m,L;\ws,t)$, if we set the space $\tilde{\cS}$ to be $O_\bfu$ in Theorem~\ref{thm:ISITA}, 
equation \eqref{eq:sp-1} yields the value ${|O_\bfu|}/{\bfM^*_{\bfw,\bfu}}$. 
In other words, in order to obtain the best upper bound for Theorem~\ref{thm:ISITA},
it suffices to consider $\tilde{\cS}$ in the collection $\mathbb{S}\triangleq\{O_\bfu : \bfu\in \cP(m,L;\ws,t)\}$.
\item When $t$ is proportional to $n$, the collection $\mathbb{S}$ is exponential in $n$ and 
hence, it remains computationally prohibitive to check through all possible constrained spaces.
Cullina and Kiyavash~\cite{CK16} introduced the local degree iterative algorithm to 
iteratively improve the objective value of \eqref{eq:gsp} by suitable modifying the current feasible point.
Unfortunately, the algorithm fails to improve {\em all} feasible points that correspond to spaces in $\mathbb{S}$.
\end{enumerate}
\end{remark}
To conclude this subsection, we apply Proposition~\ref{prop:cscc} to the case $t\in \{1,2\}$.

\begin{corollary}\label{cor:cscc}\hfill
\begin{enumerate}[(i)]
\item When $t=1$ and $1\le \ws\le L-1$, we have that
\[
C(m, L, 4, \ws)\le
\begin{cases}
 \frac{\binom{L}{\ws-1}\binom{L}{\ws}^{m-1}}{\ws}, & \mbox{if $\ws\le L/2$},\\
 \frac{\binom{L}{\ws+1}\binom{L}{\ws}^{m-1}}{L-\ws}, & \mbox{otherwise}.
\end{cases}
\]
\item When $t=2$ and $2\le \ws\le L-2$, we have that
\[
C(m, L, 6, \ws)\le
\begin{cases}
 \frac{\binom{L}{\ws-1}^2\binom{L}{\ws}^{m-2}}{\ws^2}, & \mbox{if $\ws\le \frac{L+2}{m+1}$},\\
 \frac{\binom{L}{\ws+1}^2\binom{L}{\ws}^{m-2}}{(L-\ws)^2}, & \mbox{if $\ws\ge\frac{mL-2}{m+1}$},\\
 \frac{\binom{L}{\ws}^{m}}{1+m\ws(L-\ws)}, & \mbox{otherwise}.
\end{cases}
\]
\end{enumerate}
\end{corollary}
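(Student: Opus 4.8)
The plan is to invoke Proposition~\ref{prop:cscc}, which states that the optimal value of~\eqref{eq:gsp} for CSCCs equals $\min\{|O_\bfu|/\bfM^*_{\bfw,\bfu} : \bfu\in\cP(m,L;\ws,t)\}$, and then to evaluate this minimum explicitly when $t\in\{1,2\}$. First I would enumerate $\cP(m,L;\ws,t)=\{\bfu\in\cP(m,L):\sum_i|u_i-\ws|\le t\}$: for $t=1$ and $1\le\ws\le L-1$ the only tuples are $\bfw=[\ws,\ldots,\ws]$, $[\ws+1,\ws,\ldots,\ws]$ and $[\ws,\ldots,\ws,\ws-1]$; for $t=2$, $2\le\ws\le L-2$ and $m\ge 2$ one gets these three together with $[\ws+2,\ws,\ldots,\ws]$, $[\ws,\ldots,\ws,\ws-2]$, $[\ws+1,\ws+1,\ws,\ldots,\ws]$, $[\ws,\ldots,\ws,\ws-1,\ws-1]$ and $[\ws+1,\ws,\ldots,\ws,\ws-1]$, i.e.\ eight tuples. (The case $m=1$ has only five tuples and is handled separately.) For each $\bfu$ the orbit size $|O_\bfu|=N(\bfu)\prod_i\binom{L}{u_i}$ is read off from Section~\ref{sec:auto}.

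Next I would compute the ball sizes $\bfM^*_{\bfw,\bfu}=\bigl|\cB_{O_\bfu}\bigl((0^{L-\ws}1^\ws)^m,t\bigr)\bigr|$. This is a direct count: a word of $O_\bfu$ within Hamming distance $t\le 2$ of the reference word $(0^{L-\ws}1^\ws)^m$ is reached by flipping bits, and because every subblock of the reference word has weight exactly $\ws$, the parity of $\sum_i(u_i-\ws)$ forces the number of flips and pins down how they must be distributed among subblocks. For instance, $\bfu=[\ws+1,\ws+1,\ws,\ldots,\ws]$ forces one $0\to1$ flip in each of two distinct subblocks, giving $\binom{m}{2}(L-\ws)^2$; in particular $\bfM^*_{\bfw,\bfw}=1+m\ws(L-\ws)$ when $t=2$. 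After computing the remaining counts, I would divide each ratio $|O_\bfu|/\bfM^*_{\bfw,\bfu}$ by the common factor $\binom{L}{\ws}^m$, using $\binom{L}{\ws-1}=\tfrac{\ws}{L-\ws+1}\binom{L}{\ws}$ and $\binom{L}{\ws+1}=\tfrac{L-\ws}{\ws+1}\binom{L}{\ws}$, so that the minimization becomes the maximization of an explicit list of ``denominators'' depending only on $\ws$, $L$, $m$.

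The last step is the comparison. For $t=1$ the denominators are $1$, $\ws+1$, $L-\ws+1$, and $L-\ws+1$ is the largest precisely when $\ws\le L/2$, giving part~(i). For $t=2$, writing $p=\ws+1$ and $q=L-\ws+1$ (so $p+q=L+2$), I would first observe the dominations: the denominators of the ``$\pm1$'' and ``$\pm2$'' tuples are all $\le\max\{p^2,q^2\}$, and the denominator $pq$ of $[\ws+1,\ws,\ldots,\ws,\ws-1]$ lies between $p^2$ and $q^2$, so the maximum is $\max\bigl\{1+m\ws(L-\ws),\,p^2,\,q^2\bigr\}$. The crucial inequality is $q^2\ge 1+m\ws(L-\ws)$, which rewrites as $(q-1)(q+1)\ge m(p-1)(q-1)$; dividing by $q-1>0$ and substituting $q=L+2-p$ turns it into $\ws\le\frac{L+2}{m+1}$. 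The $0\leftrightarrow1$-symmetric argument yields $p^2\ge 1+m\ws(L-\ws)$ iff $\ws\ge\frac{mL-2}{m+1}$. Since $\frac{L+2}{m+1}\le L/2\le\frac{mL-2}{m+1}$ on the range $2\le\ws\le L-2$ with $m\ge 2$, these two thresholds split $\{2,\ldots,L-2\}$ into the three announced regions, with $1+m\ws(L-\ws)$ the winning denominator in the middle, giving part~(ii).

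I expect the ball-size computations to be the main obstacle: for each of the eight orbits one must ensure that exactly the flip-patterns landing in that orbit are counted, neither missing a distance-$1$ contribution for the weight-shifting orbits nor spuriously including one for the weight-preserving orbits. The threshold bookkeeping --- verifying $\frac{L+2}{m+1}\le L/2\le\frac{mL-2}{m+1}$ and dealing with the boundary situations such as $\ws=L/2$ or $(m-1)L\le 4$ where the regions merge --- is the other delicate point, though it reduces to elementary algebra.
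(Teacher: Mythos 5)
Your proposal is correct and follows essentially the same route as the paper: invoke Proposition~\ref{prop:cscc}, enumerate the three (resp.\ eight) tuples of $\cP(m,L;\ws,t)$ together with their orbit sizes and ball sizes $\bfM^*_{\bfw,\bfu}$, and compare the resulting ratios. Your count $\bfM^*_{\bfw,\bfw}=1+m\ws(L-\ws)$ for $t=2$ is the correct one (the paper's table misprints it as $1+mL(L-\ws)$, though the corollary's final formula agrees with your value), and your $p,q$ substitution simply makes explicit the ``algebraic manipulations'' that the paper leaves to the reader.
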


\begin{proof}When $t=1$, we have three $m$-tuples in the set $\cP(m,L;\ws,1)$.
We list the $m$-tuples in $\cP(m,L;\ws,1)$ below 
with their corresponding orbit sizes and entries in the matrix $\bfM^*$.

\begin{center}
\renewcommand{\arraystretch}{1.5}
\begin{tabular}{|c|c|c|}\hline
$\bfu$ & $O_\bfu$ & $\bfM^*_{\bfw,\bfu}$\\ \hline
$[\ws, \ws,\ldots, \ws]$ & $\binom{L}{\ws}^m$ &  $1$ \\
$[\ws+1, \ws,\ldots, \ws]$ & $m\binom{L}{\ws+1}\binom{L}{\ws}^{m-1}$ & $m(L-\ws)$ \\
$[\ws, \ws,\ldots, \ws-1]$ & $m\binom{L}{\ws-1}\binom{L}{\ws}^{m-1}$ & $m\ws$ \\ \hline
\end{tabular}
\end{center}

Hence, \eqref{sol:cscc} yields the value 
\[\min\left\{ \frac{\binom{L}{\ws+1}\binom{L}{\ws}^{m-1}}{L-\ws}, \binom{L}{\ws}^m, 
\frac{\binom{L}{\ws-1}\binom{L}{\ws}^{m-1}}{\ws}\right\}.\]
Algebraic manipulations then yield (i).
\vspace{2mm}

For $t=2$, we proceed as before. The eight $m$-tuples in $\cP(m,L;\ws,2)$ below 
with their corresponding orbit sizes and entries in the matrix $\bfM^*$.

\begin{center}
\footnotesize
\renewcommand{\arraystretch}{1.7}
\setlength\tabcolsep{1.5pt}
\begin{tabular}{|c|c|c|}\hline
$\bfu$ & $|O_\bfu|$ & $\bfM^*_{\bfw,\bfu}$\\ \hline
$[\ws, \ws,\ldots, \ws]$ & $\binom{L}{\ws}^m$ &  $1+mL(L-\ws)$ \\
$[\ws+1, \ws,\ldots, \ws]$ & $m\binom{L}{\ws+1}\binom{L}{\ws}^{m-1}$ & $m(L-\ws)$ \\
$[\ws, \ws,\ldots, \ws-1]$ & $m\binom{L}{\ws-1}\binom{L}{\ws}^{m-1}$ & $m\ws$ \\ 
$[\ws+2, \ws,\ldots, \ws]$ & $m\binom{L}{\ws+1}\binom{L}{\ws}^{m-1}$ & $m(L-\ws)$ \\
$[\ws+1, \ws+1,\ldots, \ws]$ & $m\binom{L}{\ws+1}^2\binom{L}{\ws}^{m-2}$ & $\binom{m}{2}(L-\ws)^2$ \\
$[\ws+1, \ws,\ldots, \ws-1]$ & {\scriptsize $m(m-1)\binom{L}{\ws+1}\binom{L}{\ws-1}\binom{L}{\ws}^{m-2}$ }
& {\scriptsize $m(m-1)\ws(L-\ws)$} \\
$[\ws,\ldots, \ws-1, \ws-1]$ & $m\binom{L}{\ws-1}^2\binom{L}{\ws}^{m-2}$ & $\binom{m}{2}\ws^2$ \\
$[\ws, \ws,\ldots, \ws-2]$ & $m\binom{L}{\ws+1}\binom{L}{\ws}^{m-1}$ & $m(L-\ws)$ \\
\hline
\end{tabular}
\end{center}

As before, \eqref{sol:cscc} and algebraic manipulations yield (ii).
\end{proof}

\subsection{Subblock Energy Constrained Codes}

Consider the space ${\cS}=\cS(m,L,\ws)$ and set $t=\floor{(d-1)/2}$.
Let $\cT$ be the resulting space and again, our task is to determine the number of orbits that are contained in $\cT$.
To this end, we set 
\[\cP_{\rm row}(m,L;\ws)\triangleq \{\bfv\in \cP(m,L) : L\ge v_1\ge \cdots\ge v_m\ge \ws\}.\]
Hence, $\cS(m,L,\ws)=\bigcup_{\bfv \in \cP_{\rm row}(m,L;\ws)} O_\bfv$.
As before, we define $\cP(m,L;\bfv,t)= \{\bfu\in\cP(m,L): \sum_{i=1}^m |u_i-v_i|\le t\}$.
Hence, the orbits that partition $\cT$ have indices in the set
\[\cP_{\rm col}(m,L;\ws,t)\triangleq\bigcup_{\bfv\in \cP_{\rm row}(m,L;\ws)}\cP(m,L;\bfv,t). \]

The next proposition states that $N(t)$ (defined in \eqref{eq:Nt}) is an upper bound for $\cP(m,L;\bfv,t)$.
This in turn provides an upper bound for $\cP_{\rm col}(m,L;\ws,t)$, the number of variables in the reduced program.
The proof of Proposition~\ref{prop:Nt-secc} is deferred to Appendix~\ref{app:Nt}.

\begin{proposition}\label{prop:Nt-secc}For all $m, L, \bfv, t$, we have $|\cP(m,L;\bfv,t)| \le N(t)$.
Therefore, $\cP_{\rm col}(m,L;\ws,t)\le L^m N(t)$.
\end{proposition}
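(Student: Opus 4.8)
The plan is to bound $|\cP(m,L;\bfv,t)|$ by the same quantity $N(t)$ that was used to bound $|\cP(m,L;\ws,t)|$ in Proposition~\ref{prop:Nt}, and then use the union bound over $\bfv \in \cP_{\rm row}(m,L;\ws)$ together with the trivial estimate $|\cP_{\rm row}(m,L;\ws)| \le (L+1)^m$ (in fact $\le L^m$ once one observes the components are at most $L$ and, since $\ws \ge L/2$, the effective range is tighter; but the crude $L^m$ suffices). The first step is to fix an arbitrary $\bfv = [v_1,\dots,v_m]$ with $v_1 \ge \cdots \ge v_m$ and analyze $\cP(m,L;\bfv,t) = \{\bfu \in \cP(m,L) : \sum_{i=1}^m |u_i - v_i| \le t\}$. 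The key point is that this is a count of points near a \emph{fixed} nonincreasing integer vector, subject only to the constraint that $\bfu$ itself be nonincreasing with entries in $[0,L]$.

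The main technical step is to reinterpret an element $\bfu \in \cP(m,L;\bfv,t)$ in terms of how it differs from $\bfv$. Write $\delta_i = u_i - v_i$; the constraint is $\sum_i |\delta_i| \le t$. Split the index set into $I^+ = \{i : \delta_i > 0\}$ and $I^- = \{i : \delta_i < 0\}$, and set $r^+ = \sum_{i \in I^+} \delta_i$ and $r^- = \sum_{i \in I^-} (-\delta_i)$, so that $r^+ + r^- = r \le t$. Because $\bfu$ must be nonincreasing (and $\bfv$ already is), the ``surplus'' mass distributed among the positive coordinates and the ``deficit'' mass removed from the negative coordinates each form, after grouping by magnitude, a partition of an integer $\le t$. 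Concretely, I would argue that the multiset of positive values $\{\delta_i : i \in I^+\}$ is determined by a partition of $r^+$, the multiset of negative magnitudes $\{-\delta_i : i \in I^-\}$ is determined by a partition of $r^-$, and — crucially — the monotonicity constraint on $\bfu$ forces a \emph{canonical} way to attach these multisets to actual coordinates, so that $\bfu$ is recovered from the pair of partitions (plus the choice of $r^+, r^-$) with no further freedom, or at most the freedom already absorbed into the count. This yields the injection $\bfu \mapsto (\text{partition of } r^+, \text{partition of } r^-)$ with $r^+ + r^- \le t$, giving $|\cP(m,L;\bfv,t)| \le \sum_{r=0}^t \sum_{i=0}^r p(i)p(r-i) = N(t)$. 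I expect this monotonicity-forces-canonical-placement argument to be the main obstacle: one must check that there is genuinely a unique nondecreasing-compatible placement (or at least no more placements than the partition count already allows), handling ties in $\bfv$ and boundary effects at $0$ and $L$; this is essentially the same combinatorial core as the proof of Proposition~\ref{prop:Nt}, and indeed the cleanest route is to phrase it so that Proposition~\ref{prop:Nt}'s argument applies verbatim with $\ws$ replaced coordinatewise by $\bfv$, which is why the proof is deferred to Appendix~\ref{app:Nt} alongside it.

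Finally, assembling the pieces: since $\cP_{\rm col}(m,L;\ws,t) = \bigcup_{\bfv \in \cP_{\rm row}(m,L;\ws)} \cP(m,L;\bfv,t)$ and each set in the union has size at most $N(t)$, the union has size at most $|\cP_{\rm row}(m,L;\ws)| \cdot N(t)$. Bounding $|\cP_{\rm row}(m,L;\ws)|$ by $L^m$ (the number of nonincreasing $m$-tuples with entries in $\{\ws,\ws+1,\dots,L\}$ is at most the number of arbitrary such tuples, which is at most $L^m$ since there are at most $L$ choices per coordinate when $\ws \ge 1$, and one can be slightly more careful if a sharper constant is wanted) gives $|\cP_{\rm col}(m,L;\ws,t)| \le L^m N(t)$, as claimed. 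The only nonroutine part is the first proposition's bound on $|\cP(m,L;\bfv,t)|$; the union bound and the counting of $\cP_{\rm row}$ are immediate.
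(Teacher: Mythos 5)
Your proposal follows essentially the same route as the paper's Appendix~\ref{app:Nt}: map the difference vector $\bfu-\bfv$ to a pair of partitions (one recording the positive deviations, one the negative ones), argue injectivity, and finish with a union bound over $\bfv\in\cP_{\rm row}(m,L;\ws)$ using $|\cP_{\rm row}(m,L;\ws)|\le L^m$. The union-bound half is fine. The problem is the step you yourself flag as ``the main obstacle'': the claim that monotonicity of $\bfu$ forces a canonical placement of the deviation multisets, so that $\bfu$ is recoverable from the pair of partitions alone. For a \emph{constant} reference vector $\bfv=\bfw$ this is true and is exactly Lemma~\ref{lem:app}: there $\bfu-\bfw$ is itself nonincreasing, so the positive entries form a prefix, the negative entries a suffix, and zero-padding reconstructs $\bfu$. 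For a non-constant $\bfv$ the claim is false, and no treatment of ties can repair it, because the inequality $|\cP(m,L;\bfv,t)|\le N(t)$ itself fails. Take $m=4$, $L=3$, $\ws=2$, $t=1$ (the paper's own Example~\ref{exa-secc}) and $\bfv=[3,3,2,2]$. Then $N(1)=3$, but $\cP(4,3;\bfv,1)=\{[3,3,2,2],\,[3,3,3,2],\,[3,2,2,2],\,[3,3,2,1]\}$ has four elements: the three nontrivial perturbations all carry the same partition data (a single unit deviation), yet all three are legitimate nonincreasing tuples because each internal plateau of $\bfv$ offers its own admissible slot for a unit increment or decrement. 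In general the number of preimages of a fixed partition pair grows with the number of distinct runs in $\bfv$ (for a strictly decreasing $\bfv$ and $t=1$ one can get up to $2m+1$ elements), so the bound cannot hold uniformly in $\bfv$ with the constant $N(t)$ of \eqref{eq:Nt}.

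For what it is worth, the paper's own proof has exactly the same defect: the map $\phi_\bfv$ of Lemma~\ref{lem:app2} is not well-defined for non-constant $\bfv$, since $\bfell=\bfu-\bfv$ need not list its positive entries before its negative ones (e.g.\ $\bfu=[3,3,3,2]$, $\bfv=[3,3,2,2]$ gives $\bfell=[0,0,1,0]$), and passing to the sorted deviation multisets destroys injectivity as above. So you have correctly located the gap but not closed it, and it cannot be closed as stated. The damage is contained: one can restore a bound of the required flavour either by replacing $N(t)$ with a larger quantity that also accounts for which runs of $\bfv$ absorb the deviations (for $\bfv\in\cP_{\rm row}(m,L;\ws)$ there are at most $L-\ws+1$ distinct entry values, so such a quantity depends only on $t$ and $L$), or more crudely by bounding $|\cP_{\rm col}(m,L;\ws,t)|\le|\cP(m,L)|=\binom{m+L}{L}$; either repair suffices for the algorithmic conclusion in Proposition~\ref{prop:secc}. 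But as written, both your argument and the paper's prove less than Proposition~\ref{prop:Nt-secc} claims.
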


Applying Theorem~\ref{thm:reduction}, we reduce \eqref{eq:gsp} to the following optimization program.

\begin{equation}\label{gsp:SECC}
\min \left\{\sum_{\bfu\in \cP_{\rm col}(m,L;\ws,t)} |O_\bfu| Y^*_\bfu : \bfM^*\bfY^*\ge \one, Y_\bfu \ge 0 \right\},
\end{equation}
\noindent $\bfM^*$ is the matrix indexed by $\cP_{\rm row}(m,L;\ws)\times \cP_{\rm col}(m,L;\ws,t)$
whose entries are given by 
\[ \bfM^*_{\bfu,\bfv}\triangleq \big|\, B_{O_{\bfv}}(\bfx,t)\, \big| \mbox{ for some }\bfx\in O_\bfu .\]

\begin{example}\label{exa-secc}
Consider $m=4$, $L=3$, $\ws=2$ and $t=1$. Then
{\footnotesize
\begin{align*}
\cP_{\rm row}(4,3;2) & = \{[3,3,3,3],[3,3,3,2], [3,3,2,2],[3,2,2,2],[2,2,2,2] \} \\ 
\cP_{\rm col}(4,3;2,1) & = \cP_{\rm row}(4,3;2) \cup \{[3,3,3,1], [3,3,2,1], [3,2,2,1], [2,2,2,1]\}.
\end{align*}
}
Then $\bfM^*$ is given by
\[ \bfM^*=
\left(\begin{array}{rrrrrrrrr}
1 & 12 & 0 & 0 & 0 & 0 & 0 & 0 & 0 \\
1 & 1 & 9 & 0 & 0 & 2 & 0 & 0 & 0 \\
0 & 2 & 1 & 6 & 0 & 0 & 4 & 0 & 0 \\
0 & 0 & 3 & 1 & 3 & 0 & 0 & 6 & 0 \\
0 & 0 & 0 & 4 & 1 & 0 & 0 & 0 & 8
\end{array}\right)\, ,\]
and the objective function is given by
\begin{equation*}
Y_{3333} +12 Y_{3332} +54 Y_{3322} + 108Y_{3222} +81Y_{2222}
+12 Y_{3331} + 108 Y_{3321} + 324 Y_{3221} + 324 Y_{2221}.
\end{equation*}
\end{example}

\vspace{2mm}

Since the number of orbits contained in $\cS(m,L;\ws)$ is at most $L^m$, 
the optimization problem \eqref{gsp:SECC} for SECCs
has at most $L^mN(t)$ variables and at most $L^m$ constraints.
Therefore, we have the following proposition.

\begin{proposition}\label{prop:secc}
For all $m,L,\ws,t$, the exact solution to \eqref{eq:gsp} for SECCs 
can be computed in time polynomial in $L^m$ and $N(t)$.
\end{proposition}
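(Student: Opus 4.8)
The plan is to invoke Theorem~\ref{thm:reduction} with the subgroup $H$ defined in this section and then argue that the resulting reduced linear program \eqref{gsp:SECC} has both its number of variables and its number of constraints bounded by polynomials in $L^m$ and $N(t)$; once this is established, the claim follows because a linear program can be solved in time polynomial in its size. First I would recall that, by the discussion preceding the proposition, the columns of $\bfM^*$ are indexed by $\cP_{\rm col}(m,L;\ws,t)$ and the rows by $\cP_{\rm row}(m,L;\ws)$, so the number of variables equals $|\cP_{\rm col}(m,L;\ws,t)|$ and the number of constraints equals $|\cP_{\rm row}(m,L;\ws)|$.

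Next I would bound these two cardinalities. For the rows, $\cP_{\rm row}(m,L;\ws)\subseteq \cP(m,L)$ and $|\cP(m,L)|\le (L+1)^m$, so the number of constraints is at most $L^m$ (up to the harmless $+1$), as already noted in the text. For the columns, Proposition~\ref{prop:Nt-secc} gives $|\cP_{\rm col}(m,L;\ws,t)|\le L^m N(t)$ directly, since $\cP_{\rm col}$ is a union over at most $L^m$ choices of $\bfv$, each contributing at most $N(t)$ tuples. Thus the reduced program has at most $L^m N(t)$ variables and at most $L^m$ constraints; each entry $\bfM^*_{\bfu,\bfv}$ is a ball size $|B_{O_\bfv}(\bfx,t)|$, which can be computed by enumerating the finitely many words within Hamming distance $t$ of a representative $\bfx\in O_\bfu$ and tallying those lying in $O_\bfv$, a computation whose cost is again polynomial in the stated quantities (the number of words within radius $t$ of a fixed word is $\sum_{i\le t}\binom{mL}{i}$, which is polynomial in $mL$ for fixed $t$, and $N(t)$ absorbs the $t$-dependence in the intended regime).

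Finally I would conclude by citing that the optimal value of a linear program with $V$ variables and $R$ constraints and rational data of bounded bit-size can be found in time polynomial in $V$, $R$, and the input bit-length, so the exact solution to \eqref{gsp:SECC}---which by Theorem~\ref{thm:reduction} equals the value of \eqref{eq:gsp}---is computable in time polynomial in $L^m$ and $N(t)$. The main obstacle, to the extent there is one, is making precise what ``polynomial in $L^m$ and $N(t)$'' means with respect to the bit-complexity of the coefficients $|O_\bfu|$ and the entries of $\bfM^*$: these are multinomial-type integers that are exponentially large in $n=mL$, hence have $O(n\log L)$-bit encodings, so they do not spoil the polynomial-time claim, but one should state the model of computation (Turing machine with the standard binary encoding of the LP, or alternatively the real-RAM / arithmetic model) so that the assertion is unambiguous. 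Once that bookkeeping is fixed, the proof is a short assembly of Theorem~\ref{thm:reduction}, Proposition~\ref{prop:Nt-secc}, and polynomial-time linear programming.
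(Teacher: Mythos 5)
Your proposal is correct and follows essentially the same route as the paper, which justifies the proposition in a single sentence by noting that the reduced program \eqref{gsp:SECC} has at most $L^mN(t)$ variables and at most $L^m$ constraints and is therefore a tractable linear program. Your additional remarks on the bit-size of the coefficients and the computation of the entries of $\bfM^*$ are careful bookkeeping that the paper omits entirely, but they do not change the argument.
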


Even though we reduce the number of variables from $\Omega(2^{mL})$ to $O(L^mN(t))$, 
the number of variables remains exponential in $m$.
Nevertheless, when $t=1$, we are able to provide closed-form solutions for the optimization problem \eqref{gsp:SECC}.

To solve the linear program, we introduce the notion of optimality certificates.

\begin{definition}\label{def:opt-cert}
A pair $\left(\widetilde{\bfY}^*,\widetilde{\bfX}^*\right)$ is an {\em optimality certificate} for \eqref{gsp:SECC} if the following holds:
\begin{enumerate}[(i)]
\itemsep1em
\item $\widetilde{\bfY}^*$ is feasible solution for the primal problem. In other words, $\bfM^*\widetilde{\bfY}^*\ge \one$ and $\widetilde{Y}^*_\bfu \ge 0$ for all $\bfu\in  \cP_{\rm col}(m,L;\ws,t)$.
\item $\widetilde{\bfX}^*$ is feasible solution for the dual problem. 
In other words, $\widetilde{\bfX}^*\bfM^*\le \bfO$ and $ \widetilde{X}^*_\bfu \ge 0$ for $\bfu\in \cP_{\rm row}(m,L;\ws)$.
Here, $\bfO=(|O_\bfu|)_{\bfu\in  \cP_{\rm col}(m,L;\ws,t)}$.
\item $\sum_{\bfu\in \cP_{\rm col}(m,L;\ws,t)} |O_\bfu| \widetilde{Y}^*_\bfu=
\sum_{\bfu\in \cP_{\rm row}(m,L;\ws)} \widetilde{X}^*_\bfu$.
\end{enumerate}
\end{definition}

Given an optimality certificate, it is then straightforward to obtain the optimal value.

\begin{proposition}[see Chvatal \cite{Chvatal1983}]\label{prop:opt-cert}
If $(\widetilde{\bfY}^*,\widetilde{\bfX}^*)$ is an optimality certificate for \eqref{gsp:SECC}, then
the optimal value for \eqref{gsp:SECC} is given by  $\sum_{\bfu\in \cP(m,L;\ws,t)} |O_\bfu| \widetilde{Y}^*_\bfu$.
\end{proposition}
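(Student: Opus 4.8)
The plan is to read Proposition~\ref{prop:opt-cert} as the standard optimality criterion from linear programming duality, with Definition~\ref{def:opt-cert} tailored so that its three items are exactly the three ingredients of that criterion. I would first put \eqref{gsp:SECC} in the canonical form $\min\{\bfO^\top\bfY^* : \bfM^*\bfY^*\ge\one,\ \bfY^*\ge\zero\}$, where $\bfO=(|O_\bfu|)_{\bfu\in\cP_{\rm col}(m,L;\ws,t)}$ is the cost vector, the columns of $\bfM^*$ are indexed by $\cP_{\rm col}(m,L;\ws,t)$ and its rows by $\cP_{\rm row}(m,L;\ws)$, and the right-hand side is $\one$. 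Its dual is $\max\{\one^\top\bfX^* : (\bfM^*)^\top\bfX^*\le\bfO,\ \bfX^*\ge\zero\}$, with one dual variable per element of $\cP_{\rm row}(m,L;\ws)$ --- precisely the program for which item~(ii) asserts feasibility of $\widetilde{\bfX}^*$. Item~(i) asserts primal feasibility of $\widetilde{\bfY}^*$, and item~(iii) asserts that the primal and dual objective values agree at the certificate.

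The one computation I would carry out is weak duality: for any primal-feasible $\bfY^*$ and any dual-feasible $\bfX^*$,
\[
\bfO^\top\bfY^*\ \ge\ \big((\bfM^*)^\top\bfX^*\big)^\top\bfY^*\ =\ (\bfX^*)^\top\bfM^*\bfY^*\ \ge\ (\bfX^*)^\top\one\ =\ \one^\top\bfX^*,
\]
where the first inequality uses $(\bfM^*)^\top\bfX^*\le\bfO$ together with $\bfY^*\ge\zero$, and the second uses $\bfM^*\bfY^*\ge\one$ together with $\bfX^*\ge\zero$. Applying this with $\bfX^*=\widetilde{\bfX}^*$ and invoking item~(iii) gives, for every primal-feasible $\bfY^*$,
\[
\bfO^\top\bfY^*\ \ge\ \one^\top\widetilde{\bfX}^*\ =\ \bfO^\top\widetilde{\bfY}^*.
\]
Since $\widetilde{\bfY}^*$ is itself primal-feasible by item~(i), it attains the minimum of \eqref{gsp:SECC}, and hence the optimal value equals $\bfO^\top\widetilde{\bfY}^*=\sum_{\bfu\in\cP_{\rm col}(m,L;\ws,t)}|O_\bfu|\widetilde{Y}^*_\bfu$, as claimed.

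There is no genuine obstacle here --- the proposition is just a packaging of LP weak duality --- so the only care needed is in matching orientations: primal constraints (rows of $\bfM^*$) correspond to dual variables, primal variables (columns of $\bfM^*$) correspond to dual constraints, and the inequality senses flip accordingly when passing to the dual. One could equally cite the strong duality theorem or the complementary slackness conditions of Chv\'atal~\cite{Chvatal1983} outright, but writing out the two displayed inequality chains keeps the argument self-contained.
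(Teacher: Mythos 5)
Your proposal is correct and matches the paper's approach: the paper offers no written proof, simply citing Chv\'atal for this standard fact, and what you have written out is precisely the weak-duality argument that the citation stands in for, with the orientations of $\bfM^*$, $\widetilde{\bfY}^*$, and $\widetilde{\bfX}^*$ matched correctly to Definition~\ref{def:opt-cert}. (You also correctly index the sum by $\cP_{\rm col}(m,L;\ws,t)$, silently fixing a small typo in the proposition's statement.)
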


\begin{example}[Example~\ref{exa-secc} continued]\label{exa-secc-1}
Consider $m=4$, $L=3$, $\ws=2$ and $t=1$. 
Then consider the pair $\left(\widetilde{\bfY}^*,\widetilde{\bfX}^*\right)$, where
\begin{align*}
\widetilde{\bfY}^* & =(0,1/12,1/4,1/4,0,0,0,0,0), \mbox{ and}\\
\widetilde{\bfX}^* & =(1,0,0,18,45/2).
\end{align*}
We verify that 
{\small
\[\bfM^*\widetilde{\bfY}^* \ge \one\mbox{ and }\widetilde{\bfX}^*\bfM^*\le (1, 12, 54, 108, 81,12, 108, 324, 324).\]
}
Also, we check that
\begin{align*}
\sum_{\bfu\in \cP_{\rm col}(m,L;\ws,t)} |O_\bfu| \widetilde{Y}^*_\bfu &=
\frac{12}{12}+\frac{54}{4}+\frac{108}{4}=41.5\\
\sum_{\bfu\in \cP_{\rm row}(m,L;\ws)} \widetilde{X}^*_\bfu &=
1+18+\frac{45}{2}=41.5.
\end{align*}
Hence, $\left(\widetilde{\bfY}^*,\widetilde{\bfX}^*\right)$ satisfies all properties in Definition~\ref{def:opt-cert}. 
Therefore, it follows from Proposition~\ref{prop:opt-cert} that the solution to \eqref{eq:gsp} for SECCs is 41.5.
\end{example}

Therefore, to determine \eqref{gsp:SECC}, it suffices to provide optimality certificates for the problem.
We provide these certificates and the detailed verification in Appendix~\ref{app:opt}.
Here, we state the exact solutions to \eqref{gsp:SECC}.

\begin{proposition}\label{prop:secc1}
Fix $t=1$ and $\ws=L-1$. For all $L$ and $L\ge m/2$, the exact solution to \eqref{gsp:SECC}is as follow.
\begin{enumerate}[(i)]
\item When $m\equiv 0\pmod 4$, the solution is
\[
1+\sum_{i=0}^{\floor{m/4}-1} \frac{1}{4i+4}\left(\binom{m}{4i+2}L^{4i+2}+\binom{m}{4i+3}L^{4i+3}\right);
\]

\item when $m\equiv 1\pmod 4$, the solution is
\[
1+\sum_{i=0}^{\floor{m/4}-1} \frac{1}{4i+5}\left(\binom{m}{4i+3}L^{4i+3}+\binom{m}{4i+4}L^{4i+4}\right);
\]

\item when $m\equiv 2\pmod 4$, the solution is
\[
\sum_{i=0}^{\floor{m/4}} \frac{1}{4i+2}\left(\binom{m}{4i}L^{4i}+\binom{m}{4i+1}L^{4i+1}\right);
\]
\item when $m\equiv 3\pmod 4$, the solution is
\[
\sum_{i=0}^{\floor{m/4}} \frac{1}{4i+3}\left(\binom{m}{4i+1}L^{4i+1}+\binom{m}{4i+2}L^{4i+2}\right).
\]
\end{enumerate}
\end{proposition}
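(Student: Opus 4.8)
This proposition is set up to be proved with the optimality-certificate machinery of Definition~\ref{def:opt-cert} and Proposition~\ref{prop:opt-cert}: for each of the four residues of $m$ modulo $4$ I would exhibit an explicit pair $\left(\widetilde{\bfY}^*,\widetilde{\bfX}^*\right)$ --- a feasible primal point of \eqref{gsp:SECC} and a feasible dual point --- of equal objective value, and the claimed formula is that common value. The first task is to write \eqref{gsp:SECC} out in full when $t=1$ and $\ws=L-1$. Here every subblock weight lies in $\{L-2,L-1,L\}$, so $\cP_{\rm row}(m,L;L-1)$ consists of the $m+1$ tuples $\bfv_k=[L^k,(L-1)^{m-k}]$, $0\le k\le m$, while $\cP_{\rm col}(m,L;L-1,1)$ adds the $m$ tuples $\bfw_k=[L^k,(L-1)^{m-k-1},L-2]$, $0\le k\le m-1$. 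A direct count gives $|O_{\bfv_k}|=\binom{m}{k}L^{m-k}$ and $|O_{\bfw_k}|=(m-k)\binom{m}{k}L^{m-k-1}\binom{L}{2}$, and the row of $\bfM^*$ indexed by $\bfv_k$ carries only the four entries $kL$, $1$, $m-k$, $(m-k)(L-1)$, located in columns $\bfv_{k-1}$, $\bfv_k$, $\bfv_{k+1}$, $\bfw_k$. Thus $\bfM^*$ is tridiagonal in the $\bfv$'s with one extra nonzero per row; every primal inequality involves only three consecutive $Y$-variables and one $Z$-variable, and the dual inequality attached to the column $\bfw_k$ collapses (divide through by $(m-k)(L-1)$) to $\widetilde{X}^*_{\bfv_k}\le|O_{\bfv_k}|/2$, so the $\bfw$-columns are harmless.

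For the primal certificate I would take all $Z$-variables equal to $0$ and place mass on the $\bfv_k$'s in blocks of four consecutive indices: for each admissible $i$ set $\widetilde{Y}^*_{\bfv_{4i+1}}=\widetilde{Y}^*_{\bfv_{4i+2}}=\tfrac{1}{m-4i}$. Then the inequalities indexed by $\bfv_{4i}$ and $\bfv_{4i+1}$ hold with equality (using the entries $m-k$ and $1$ alone), and those indexed by $\bfv_{4i+2}$ and $\bfv_{4i+3}$ are met through the off-diagonal contribution $kL\,\widetilde{Y}^*_{\bfv_{k-1}}$ --- the first place the bound on $L$ relative to $m$ is used, since it is exactly what forces those off-diagonal terms to be large enough. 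What happens at the top index $k=m$ depends on $m\bmod 4$: either the last block simply closes, or one appends the single correction $\widetilde{Y}^*_{\bfv_m}=1$ (equivalently $\widetilde{Y}^*_{\bfv_{m-1}}=1/(mL)$, since $|O_{\bfv_{m-1}}|=mL$), and this correction is precisely the lone ``$+1$'' present in the cases $m\equiv 0,1\pmod 4$.

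For the dual certificate I would support $\widetilde{\bfX}^*$ on the indices $k$ at which the primal inequality is tight --- the $k\equiv 0,1\pmod 4$ indices, together with the index $k=m$ when $m\equiv 0,1\pmod 4$ --- and determine its entries by forcing the dual inequalities attached to the support of $\widetilde{\bfY}^*$ to be equalities. Because $\widetilde{X}^*_{\bfv_j}=0$ whenever $j\equiv 2,3\pmod 4$, these conditions unwind into a triangular recursion, $\widetilde{X}^*_{\bfv_k}=|O_{\bfv_{k+1}}|/(m-k)$ for $k\equiv 1$ and $\widetilde{X}^*_{\bfv_k}=\big(|O_{\bfv_{k+1}}|-\widetilde{X}^*_{\bfv_{k+1}}\big)/(m-k)$ for $k\equiv 0$, with a convention at the top index dictated by the chosen primal correction. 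Nonnegativity of the resulting vector is automatic: after substituting the binomial identities it reduces to $(m-k-1)\binom{m}{k+1}L\ge\binom{m}{k+2}$, i.e. to $L\ge\tfrac{1}{k+2}$, which always holds.

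Finally I would check the three properties of Definition~\ref{def:opt-cert}. Primal feasibility is a short residue-by-residue verification as above; matching objectives --- that $\sum_k|O_{\bfv_k}|\widetilde{Y}^*_{\bfv_k}=\sum_k\widetilde{X}^*_{\bfv_k}$ and that this common sum equals the displayed expression --- follows by telescoping the recursion block by block and rewriting $|O_{\bfv_k}|=\binom{m}{k}L^{m-k}=\binom{m}{m-k}L^{m-k}$ so as to recognise the combinations $\binom{m}{4i+2}L^{4i+2}+\binom{m}{4i+3}L^{4i+3}$ (and their index shifts) appearing in the four cases. \emph{The crux --- and where the hypothesis relating $L$ and $m$ is genuinely needed --- is dual feasibility} $\widetilde{\bfX}^*\bfM^*\le\bfO$: the $\bfw$-columns are controlled by $\widetilde{X}^*_{\bfv_k}\le|O_{\bfv_k}|/2$, and the $\bfv_j$-columns with $j\equiv 2,3\pmod 4$ are tight or slack by construction, but the $\bfv_j$-columns with $j\equiv 0\pmod 4$ --- above all the all-$(L-1)$ column $\bfv_0$, for which the inequality amounts essentially to $L^{m-1}\!\left(1+\tfrac{m}{2}\right)-\tfrac12 L^{m-2}\le L^{m}$ --- are the binding ones, and verifying them uniformly in $m$ is the main technical obstacle. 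Once all three properties hold, Proposition~\ref{prop:opt-cert} identifies the common objective value, namely the stated closed form, as the exact optimum of \eqref{gsp:SECC}.
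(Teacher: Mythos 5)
Your proposal is correct and follows essentially the same route as the paper: the paper proves Proposition~\ref{prop:secc1} in Appendix~\ref{app:opt} by exhibiting exactly the primal/dual optimality certificates you describe (Table~\ref{table:secc1}), with the same tridiagonal-plus-one-column structure of $\bfM^*$, the same primal support on pairs of consecutive orbits (your $Y^*_{\bfv_m}=1$ being the equivalent of its $Y^*_{\bfL(m-1)}=1/(mL)$), and a dual vector that is precisely the closed form of your triangular recursion. Your identification of the column $\bfv_0$ dual constraint as the binding one is also borne out by the paper's case-by-case verification.
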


\begin{proposition}\label{prop:secc2}
Fix $t=1$ and $m=1$. For all $L$ and $\ws\ge L/2$, the exact solution to \eqref{gsp:SECC} is as follow.
\begin{enumerate}[(i)]
\item When $L-\ws\equiv 0\pmod 4$, the solution is
\[
1+\sum_{i=0}^{\floor{(L-\ws)/4}-1} \frac{1}{4i+4}\left(\binom{L}{4i+2}+\binom{L}{4i+3}\right);
\]

\item when $L-\ws\equiv 1\pmod 4$, the solution is
\[
1+\sum_{i=0}^{\floor{(L-\ws)/4}} \frac{1}{4i+5}\left(\binom{L}{4i+3}+\binom{L}{4i+4}\right);
\]

\item when $L-\ws\equiv 2\pmod 4$, the solution is
\[
\sum_{i=0}^{\floor{(L-\ws)/4}} \frac{1}{4i+2}\left(\binom{L}{4i}+\binom{L}{4i+1}\right);
\]
\item when $L-\ws\equiv 3\pmod 4$, the solution is
\[
\sum_{i=0}^{\floor{(L-\ws)/4}} \frac{1}{4i+3}\left(\binom{L}{4i+1}+\binom{L}{4i+2}\right).
\]
\end{enumerate}
\end{proposition}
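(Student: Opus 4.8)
The plan is to prove Proposition~\ref{prop:secc2} by linear-programming duality: for each residue class of $L-\ws$ modulo~$4$ I will exhibit an explicit optimality certificate $(\widetilde{\bfY}^*,\widetilde{\bfX}^*)$ in the sense of Definition~\ref{def:opt-cert}, so that Proposition~\ref{prop:opt-cert} immediately yields the claimed value. The first step is to make \eqref{gsp:SECC} explicit for $m=1$. Here $\cS(1,L,\ws)$ is the set of all length-$L$ words of weight at least $\ws$, the orbits $O_u$ are the constant-weight shells with $|O_u|=\binom{L}{u}$, the rows of $\bfM^*$ are indexed by $\cP_{\mathrm{row}}=\{\ws,\ws+1,\ldots,L\}$ and the columns by $\cP_{\mathrm{col}}=\{\ws-1,\ws,\ldots,L\}$. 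Since flipping one coordinate changes the weight by exactly $\pm1$, the matrix $\bfM^*$ is banded: $\bfM^*_{v,v-1}=v$, $\bfM^*_{v,v}=1$, $\bfM^*_{v,v+1}=L-v$, and $0$ otherwise. So the primal reads $\min\sum_u\binom{L}{u}Y_u$ subject to $vY_{v-1}+Y_v+(L-v)Y_{v+1}\ge1$ for all $v$ and $Y\ge\zero$, and the dual reads $\max\sum_vX_v$ subject to $(L-u+1)X_{u-1}+X_u+(u+1)X_{u+1}\le\binom{L}{u}$ for all $u$ and $X\ge\zero$, with the convention $X_v=0$ for $v\notin\{\ws,\ldots,L\}$.

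Next I will write down the certificate. For the primal, $\widetilde{\bfY}^*$ is supported on a disjoint union of two-element ``blocks'' of consecutive weights, spaced four apart and running downward from near the top weight~$L$, together with one isolated unit atom at weight~$L$ precisely when $L-\ws\equiv0$ or $1\pmod4$; inside the $i$-th block both entries equal $1/(4i+c)$ for a residue-dependent constant~$c$. The placement is forced by demanding that, out of every four consecutive primal constraints, exactly two hold with equality (each block ``straddles'' four rows), and this is exactly what produces the period-four behaviour. For the dual, $\widetilde{\bfX}^*$ is supported on the complementary rows --- those where the primal constraint is tight --- with the values determined by back-substitution through the triangular subsystem picked out by the columns $u$ with $\widetilde{Y}^*_u>0$; the unit atom corresponds to the single row $v=L$ carrying dual weight~$1$.

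The verification is the three-point check of Definition~\ref{def:opt-cert}, done uniformly in the residue class. Nonnegativity is clear. For $\bfM^*\widetilde{\bfY}^*\ge\one$ I argue row by row: a row met by a block gives an identity of the form $\tfrac{1}{4i+c}+\tfrac{4i+c-1}{4i+c}=1$, while a row between two blocks gives an inequality $\tfrac{L-4i-1}{4i+c}\ge1$, which I verify using $L\ge2(L-\ws)$ --- i.e. $\ws\ge L/2$ --- to bound $L$ below in terms of the block index~$i$. For $\widetilde{\bfX}^*\bfM^*\le\bfO$ I argue column by column: each dual inequality reduces to $\binom{L}{u}$ dominating a nonnegative combination of $\binom{L}{u-1}$ and $\binom{L}{u+1}$, which holds because the binomial coefficients are monotone across the range of weights where the dual is active, again by $\ws\ge L/2$. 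Finally, for the objective I compute $\sum_u\binom{L}{u}\widetilde{Y}^*_u$, which telescopes block by block into a sum of terms $\tfrac{1}{4i+c}\bigl(\binom{L}{b}+\binom{L}{b+1}\bigr)$ (plus $\binom{L}{L}=1$ when the atom is present) --- exactly the closed form in the statement --- and then check that $\sum_v\widetilde{X}^*_v$ equals the same value, i.e. condition~(iii). Proposition~\ref{prop:opt-cert} then closes the argument.

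The step I expect to be hardest is guessing the certificate correctly in all four residue classes simultaneously, and in particular handling the degenerate boundary rows and columns ($v=\ws$, $v=L$, $u=\ws-1$, $u=L$) where the banded structure of $\bfM^*$ truncates; once the certificate is written, the three verifications are routine binomial identities and inequalities. The companion Proposition~\ref{prop:secc1} (with $\ws=L-1$ fixed and $m$ varying) follows the same scheme: now $m$ plays the role of $L-\ws$, the orbit sizes take the form $\binom{m}{j}L^{j}$ rather than $\binom{L}{j}$, and the hypothesis $L\ge m/2$ replaces $\ws\ge L/2$ in the feasibility estimates.
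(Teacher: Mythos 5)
Your proposal follows essentially the same route as the paper: the paper also proves Proposition~\ref{prop:secc2} by exhibiting an explicit optimality certificate $(\widetilde{\bfY}^*,\widetilde{\bfX}^*)$ for each residue class of $L-\ws$ modulo $4$ (its Table~\ref{table:secc2}), with exactly the block structure you describe --- pairs of consecutive weights spaced four apart carrying value $1/(4i+c)$, a unit atom at weight $L$ in the residue classes $0$ and $1$, and a dual vector obtained by back-substitution --- and then verifies the three conditions of Definition~\ref{def:opt-cert} and invokes Proposition~\ref{prop:opt-cert}. Your banded description of $\bfM^*$ and the tight/slack row analysis match the paper's detailed verification (carried out there for the companion case $\ws=L-1$, $m\equiv 0\pmod 4$), so the plan is sound and essentially identical in method.
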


\newpage
\section{Improved bounds on Asymptotic Rates}
In this section, we provide improved upper bounds on the asymptotic rates for CSCCs and SECCs for a range of relative distance values. These results are obtained by judiciously choosing appropriate constrained spaces for estimating asymptotic ball sizes, and by applying the generalized sphere-packing bound. 

\subsection{Constant Subblock-Composition Codes} \label{sec:CSCC_ImprovedSPB}
We present bounds for the CSCC rate in the asymptotic setting where the number of subblocks $m$ tends to infinity, minimum distance $d$ scales linearly with $m$, but $L$ and $w$ are fixed. Formally,  for fixed $0 < \delta < 1$, the asymptotic rate for CSCCs with fixed subblock length $L$, subblock weight parameter $w$, number of subblocks in a codeword $m \to \infty$, and minimum distance $d$ scaling as $d= \lfloor mL\delta \rfloor$ is defined as
\begin{equation}
\gamma(L,\delta,w/L) \triangleq \limsup_{m\to \infty} \frac{\log C(m,L, \lfloor mL\delta \rfloor, w)}{mL} . \label{eq:CSCC_Rate_Def}\\
\end{equation}

The asymptotic CSCC rate, $\gamma(L,\delta,w/L)$, was studied in~\cite{TandonKM18_IT} and it was shown that $\gamma(L,\delta,w/L)=0$ when $\delta \ge \delta^*(w/L)$, where $\delta^*(w/L)$ is defined as
\begin{equation*}
\delta^*(w/L) \triangleq 2\left(\frac{w}{L}\right) \left(1 - \frac{w}{L}\right) .
\end{equation*}
Further, in~\cite{TandonKM18_IT} the following sphere-packing upper bound on $\gamma(L,\delta,w/L)$ was presented.
\begin{theorem}[Tandon \etal{} \cite{TandonKM18_IT}] \label{thm:CSCC_SP_Arxiv}
	For $0 < \delta < \delta^*(w/L)$, we have
	\begin{equation}
	\gamma(L,\delta,w/L) \le \gamma_{SP}(L,\delta,w/L) ,
	\end{equation}
	where $\gamma_{SP}(L,\delta,w/L)$ is defined as
	\begin{align}
	& \frac{1}{L}\log \binom{L}{w} -  \left(\frac{1+\tilde{u}- \lceil\tilde{u}\rceil}{L}\right) \log \binom{w}{\lceil\tilde{u}\rceil} \nonumber \\
	&- \left(\frac{\lceil\tilde{u}\rceil - \tilde{u}}{L}\right) \log \binom{w}{\lfloor\tilde{u}\rfloor} - \left(\frac{\lceil\tilde{u}\rceil - \tilde{u}}{L}\right) \log \binom{L - w}{\lfloor\tilde{u}\rfloor} \nonumber \\
	&- \left(\frac{1 + \tilde{u} - \lceil\tilde{u}\rceil}{L}\right) \log \binom{L - w}{\lceil\tilde{u}\rceil} - \frac{1}{L} h(\lceil\tilde{u}\rceil - \tilde{u}) ,
	\label{eq:CSCC_SP_Arxiv}
	\end{align}
	where $\tilde{u} \triangleq \delta L/4$.
\end{theorem}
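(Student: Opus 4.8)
The plan is to recover \eqref{eq:CSCC_SP_Arxiv} as the classical sphere-packing bound applied to the CSCC space, obtained as the special case of Theorem~\ref{thm:ISITA} with $\tilde{\cS}=\cS=\cC(m,L,w)$. Since $\cC(m,L,w)=O_{[w,\dots,w]}$ is $H$-closed, the ball $\cB_{\cC(m,L,w)}(\bfx,t)$ has the same cardinality, call it $V$, for every codeword $\bfx$; hence $V^{\min}_{\cS,\cS}(t)=V$, and provided $V\ge 1$ (true for $m$ large, as $\delta<\delta^*(w/L)<1$) Theorem~\ref{thm:ISITA} gives $C(m,L,\lfloor mL\delta\rfloor,w)\le \binom{L}{w}^m/V$. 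Taking $\tfrac{1}{mL}\log(\cdot)$, then $\limsup_{m\to\infty}$, and using $\tfrac{1}{mL}\log\binom{L}{w}^m=\tfrac1L\log\binom{L}{w}$, it remains to show that $\liminf_{m\to\infty}\tfrac{1}{mL}\log V$ is at least the quantity subtracted from $\tfrac1L\log\binom Lw$ on the right-hand side of \eqref{eq:CSCC_SP_Arxiv}.

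Next I would write $V$ exactly. Inside one subblock of length $L$ and weight $w$, the number of weight-$w$ words at Hamming distance $2j$ from a fixed one equals $\binom{w}{j}\binom{L-w}{j}$ (delete $j$ of the $w$ ones, add $j$ of the $L-w$ zeros). Since the Hamming distance between two CSCC codewords is the sum of the per-subblock distances, all of which are even,
\[ V=\sum_{\substack{j_1,\dots,j_m\ge 0\\ j_1+\cdots+j_m\le \lfloor t/2\rfloor}}\ \prod_{i=1}^{m}\binom{w}{j_i}\binom{L-w}{j_i}, \]
where $t=\lfloor(\lfloor mL\delta\rfloor-1)/2\rfloor$, so $\lfloor t/2\rfloor = mL\delta/4 + O(1) = m\tilde{u}+O(1)$ with $\tilde{u}=\delta L/4$. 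To lower-bound $V$, discard all summands except those in which every $j_i\in\{\lfloor\tilde{u}\rfloor,\lceil\tilde{u}\rceil\}$, with a fraction $k\triangleq \lceil\tilde{u}\rceil-\tilde{u}$ of the indices equal to $\lfloor\tilde{u}\rfloor$ and the remaining fraction $1-k=1+\tilde{u}-\lceil\tilde{u}\rceil$ equal to $\lceil\tilde{u}\rceil$ (rounding the count of each type to an integer and, if needed, lowering a bounded number of the $j_i$ by one so that $\sum_i j_i\le\lfloor t/2\rfloor$). This leaves at least $\binom{m}{km}$ surviving terms, each equal to $\binom{w}{\lfloor\tilde{u}\rfloor}^{km}\binom{L-w}{\lfloor\tilde{u}\rfloor}^{km}\binom{w}{\lceil\tilde{u}\rceil}^{(1-k)m}\binom{L-w}{\lceil\tilde{u}\rceil}^{(1-k)m}$ up to $m^{O(1)}$ factors; note that $\delta<\delta^*(w/L)$ forces $\tilde{u}<\min\{w,L-w\}$, so all these binomials are positive.

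Then I would pass to the exponent. Using Stirling's approximation, $\tfrac1m\log\binom{m}{km}\to h(k)$ while the per-subblock factors contribute $O(1)$, so
\[ \liminf_{m\to\infty}\frac{1}{mL}\log V\ \ge\ \frac1L\Big[(1-k)\log\binom{w}{\lceil\tilde{u}\rceil}\binom{L-w}{\lceil\tilde{u}\rceil}+k\log\binom{w}{\lfloor\tilde{u}\rfloor}\binom{L-w}{\lfloor\tilde{u}\rfloor}+h(k)\Big]. \]
Substituting $k=\lceil\tilde{u}\rceil-\tilde{u}$, $1-k=1+\tilde{u}-\lceil\tilde{u}\rceil$ and distributing, the right-hand side becomes exactly the five logarithmic terms together with $\tfrac1L h(\lceil\tilde{u}\rceil-\tilde{u})$ subtracted in \eqref{eq:CSCC_SP_Arxiv}; combined with the first paragraph this yields $\gamma(L,\delta,w/L)\le\gamma_{SP}(L,\delta,w/L)$.

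The main obstacle, and really the only subtlety, is the interplay between the feasibility budget $\lfloor t/2\rfloor$ and the target average $\tilde{u}=\delta L/4$: because $t$ deviates from $mL\delta/2$ by a bounded amount, a configuration with average exactly $\tilde{u}$ may slightly overshoot the budget, and one must absorb this by shifting $O(1)$ of the $j_i$ toward $\lfloor\tilde{u}\rfloor$ and check that this perturbs neither $\binom{m}{km}$ (changes by $m^{O(1)}$) nor the product of binomials (changes by $O(1)$) at the exponential scale. A minor related point is the degenerate case where $\tilde{u}$ is an integer: then $k=0$, $h(k)=0$, the two "types" of subblocks coincide, and one simply takes $j_i=\tilde{u}$ for all $i$. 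Everything else — the single-subblock count $\binom wj\binom{L-w}j$, the positivity of the binomials from $\delta<\delta^*(w/L)$, and the Stirling estimate — is routine.
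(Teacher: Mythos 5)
Your reconstruction is correct, and it follows exactly the route the paper indicates: Theorem~\ref{thm:CSCC_SP_Arxiv} is stated here as a citation of \cite{TandonKM18_IT} with no proof given, but the surrounding text explicitly says the bound ``was obtained by estimating the ball size in the space $\cC(m,L,w)$,'' i.e.\ the case $\tilde{\cS}=\cS=\cC(m,L,w)$ of Theorem~\ref{thm:ISITA}, which is precisely your argument. Your per-subblock distance count $\binom{w}{j}\binom{L-w}{j}$, the floor/ceiling mixture giving the $h(\lceil\tilde{u}\rceil-\tilde{u})$ term, and the handling of the $O(1)$ budget slack are all sound, so nothing further is needed.
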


We will show that for certain parameters, the above result can be improved by applying the generalized sphere packing formulation in Theorem~\ref{thm:ISITA}. The bound on the asymptotic CSCC rate in Theorem~\ref{thm:CSCC_SP_Arxiv} was obtained by estimating the ball size in the space $\cC(m,L,w)$, and therefore corresponds to the case where $\tilde{\cS} = \cS = \cC(m,L,w)$. In Prop.~\ref{prop:CSCC_SP_codesize_Gen}, we present an upper bound on the optimal CSCC code-size, $C(m,L,d,w)$, by choosing the space $\tilde{\cS} = \cC(m,L,w+1)$.

\begin{proposition} \label{prop:CSCC_SP_codesize_Gen}
	For $2m < d \le 6m$ and $L \ge w+2$, with $t = \floor{(d-1)/2}$ and $\tilde{t} = \floor{(t-m)/2}$, we have
	\begin{equation} \label{eq:CSCC_SP_codesize_Gen}
	C(m,L,d,w) \le \frac{{\binom{L}{w+1}}^m}{ \binom{m}{\tilde{t}} \left[ \binom{L-w}{2} \binom{w}{1}\right]^{\tilde{t}} (L-w)^{m-\tilde{t}} } .
	\end{equation}
\end{proposition}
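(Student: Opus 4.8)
The plan is to invoke Theorem~\ref{thm:ISITA} with the constrained space $\tilde{\cS} = \cC(m,L,w+1)$ (so that $\cS = \cC(m,L,w)$ and $n = mL$), which immediately gives $|\tilde{\cS}| = \binom{L}{w+1}^m$, since each of the $m$ subblocks may independently be any weight-$(w+1)$ word of length $L$. The entire content then reduces to producing a lower bound $D$ on $V^{\min}_{\cS,\tilde{\cS}}(t)$, i.e. on the number of words $\bfy\in\cC(m,L,w+1)$ within Hamming distance $t=\floor{(d-1)/2}$ of a fixed $\bfx\in\cC(m,L,w)$; once such a $D\ge 1$ is in hand, \eqref{eq:sp-1} yields $C(m,L,d,w)\le |\tilde{\cS}|/D$, which is the claimed bound provided $D$ equals the denominator of \eqref{eq:CSCC_SP_codesize_Gen}.

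For the lower bound I would analyze the transformation subblock by subblock. Fix any $\bfx\in\cC(m,L,w)$ and let $\bfy\in\cC(m,L,w+1)$. Within subblock $i$, turning a weight-$w$ pattern into a weight-$(w+1)$ pattern requires flipping $b_i+1$ of the $L-w$ zeros up and $b_i$ of the $w$ ones down for some integer $b_i\ge 0$, incurring exactly $2b_i+1$ coordinate changes and offering $\binom{L-w}{b_i+1}\binom{w}{b_i}$ choices; hence $\bfy$ is counted iff $\sum_i (2b_i+1) = 2\sum_i b_i + m \le t$. Rather than summing over all admissible tuples $(b_1,\dots,b_m)$, I would retain only the configurations in which exactly $\tilde{t}=\floor{(t-m)/2}$ of the subblocks have $b_i=1$ and the remaining $m-\tilde{t}$ have $b_i=0$. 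Here one must verify $0\le \tilde{t}\le m$: the hypothesis $2m<d$ gives $t\ge m$, so $\tilde{t}\ge 0$, and $d\le 6m$ gives $t\le 3m-1$, so $\tilde{t}\le m-1$; and such a configuration is admissible since $2\tilde{t}+m\le t$. Counting these configurations contributes $\binom{m}{\tilde{t}}$ choices of the ``active'' subblocks, a factor $\binom{L-w}{2}\binom{w}{1}$ from each active subblock, and a factor $\binom{L-w}{1}=L-w$ from each inactive one, giving $D=\binom{m}{\tilde{t}}\big[\binom{L-w}{2}\binom{w}{1}\big]^{\tilde{t}}(L-w)^{m-\tilde{t}}$, which is uniform over $\bfx\in\cS$.

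To finish, I would note $D\ge 1$: the hypothesis $L\ge w+2$ forces $L-w\ge 2$, hence $\binom{L-w}{2}\ge1$ and $(L-w)^{m-\tilde t}\ge1$, while $\binom{w}{1}=w\ge1$ (the bound is vacuous for $w=0$, where $C(m,L,d,0)=1$) and $\binom{m}{\tilde{t}}\ge1$; this is exactly the condition $V^{\min}_{\cS,\tilde{\cS}}(t)\ge1$ needed for Theorem~\ref{thm:ISITA}, so substituting $|\tilde{\cS}|=\binom{L}{w+1}^m$ and $V^{\min}_{\cS,\tilde{\cS}}(t)\ge D$ into \eqref{eq:sp-1} gives \eqref{eq:CSCC_SP_codesize_Gen}. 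The only genuinely delicate point is the per-subblock bookkeeping — recognizing that raising a subblock's weight by one always costs an odd number $2b_i+1$ of flips — together with confirming that the selected ``$\tilde t$ active subblocks'' configuration is feasible, i.e. that $\tilde t$ lands in $[0,m]$ under the stated range $2m<d\le 6m$; the remainder is routine counting.
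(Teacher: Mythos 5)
Your proposal is correct and follows essentially the same route as the paper: choose $\tilde{\cS}=\cC(m,L,w+1)$ in Theorem~\ref{thm:ISITA} and lower-bound the ball volume by counting exactly those $\bfy$ that differ from $\bfx$ in three positions in $\tilde t$ subblocks and in one position in the remaining $m-\tilde t$ subblocks, which is precisely the paper's set $\Lambda_{\bfx}$. Your added checks that $0\le\tilde t\le m$ under $2m<d\le 6m$ and that the denominator is at least $1$ are worthwhile details the paper leaves implicit.
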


\begin{IEEEproof}
	We will apply Theorem~\ref{thm:ISITA}, where we choose $\tilde{\cS} = \cC(m,L,w+1)$. Thus
	$|\tilde{\cS}| = \binom{L}{w+1}^m$, and using Theorem~\ref{thm:ISITA}, it suffices to show that
	\begin{equation} \label{eq:CSCC_BallSize_Gen}
	V^{\min}_{\cS,\tilde{\cS}}(t) \ge \binom{m}{\tilde{t}} \left[\binom{L-w}{2} \binom{w}{1}\right]^{\tilde{t}} (L-w)^{m-\tilde{t}} ,
	\end{equation}
	where the constrained CSCC space is $\cS = \cC(m,L,w)$. For $\bfx \in \cS$, let $\Lambda_{\bfx}$ consist of all words $\bfy \in \tilde{\cS}$ which satisfy the following two properties:
	\begin{enumerate}[(i)]
		\item $\tilde{t}$ subblocks of $\bfy$ differ from corresponding subblocks of $\bfx$ in exactly three bit positions.
		\item Remaining $m - \tilde{t}$ subblocks of $\bfy$ differ from corresponding subblocks of $\bfx$ in exactly one bit position.
	\end{enumerate}
	The size of $\Lambda_{\bfx}$ is given by
	\begin{equation*}
	|\Lambda_{\bfx}| = \binom{m}{\tilde{t}} \left[\binom{L-w}{2} \binom{w}{1}\right]^{\tilde{t}} (L-w)^{m-\tilde{t}} .
	\end{equation*}
	For any $\bfy \in \Lambda_{\bfx}$, we observe that $\tau(\bfx,\bfy) = 3\tilde{t} + (m-\tilde{t}) \le t$, and thus $\Lambda_{\bfx} \subseteq \cB_{\tilde{\cS}}(\bfx,t)$. 
	Finally, the inequality in \eqref{eq:CSCC_BallSize_Gen} follows because $\cB_{\tilde{\cS}}(\bfx,t) \ge |\Lambda_{\bfx}|$ for all $\bfx \in \cS$.
\end{IEEEproof}

The following theorem applies Prop.~\ref{prop:CSCC_SP_codesize_Gen} to provide an upper bound on the asymptotic rate for CSCCs.
\begin{theorem}
	For $2/L < \delta < 6/L \le \delta^*(w/L)$, we have
	\begin{equation}
	\gamma(L,\delta,w/L) \le \acute{\gamma}_{SP}(L,\delta,w/L) , 
	\end{equation}
	where $\acute{\gamma}_{SP}(L,\delta,w/L)$ is defined as
	\begin{align} \label{eq:CSCC_SP_new}
	\frac{1}{L} \log&\binom{L}{w+1} -\left(\frac{\delta}{4} - \frac{1}{2L}\right) \log\left[\binom{L-w}{2} \binom{w}{1}\right] \nonumber \\
	&-\frac{1}{L} h\left( \frac{L\delta}{4} - \frac{1}{2}\right) - \left(\frac{3}{2L} - \frac{\delta}{4}\right) \log(L-w) .
	\end{align}
\end{theorem}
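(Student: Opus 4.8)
The plan is to take logarithms in the bound of Proposition~\ref{prop:CSCC_SP_codesize_Gen}, set $d=\lfloor mL\delta\rfloor$, divide by $mL$, and pass to the limit $m\to\infty$ with $L,w$ fixed. First I would record the asymptotics of the parameters $t$ and $\tilde t$. Since $d=\lfloor mL\delta\rfloor$ and $t=\floor{(d-1)/2}$, we have $t = mL\delta/2 + O(1)$, so $t-m = m(L\delta/2-1) + O(1)$ and hence $\tilde t = \floor{(t-m)/2} = m\bigl(L\delta/4-1/2\bigr)+O(1)$. Thus $\tilde t/m \to L\delta/4 - 1/2 =: \alpha$ and $(m-\tilde t)/m \to 1-\alpha = 3/2 - L\delta/4$; note the hypothesis $2/L<\delta<6/L$ guarantees $0<\alpha<1$, so these are valid fractions and Proposition~\ref{prop:CSCC_SP_codesize_Gen} applies for all large $m$ (the condition $2m<d\le 6m$ becomes $2/L<\delta\le 6/L$, and $L\ge w+2$ is the stated hypothesis).

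Next I would apply $\frac{1}{mL}\log(\cdot)$ to the right-hand side of \eqref{eq:CSCC_SP_codesize_Gen}. The numerator contributes $\frac{1}{L}\log\binom{L}{w+1}$ exactly. For the denominator, $\frac{1}{mL}\log\binom{L-w}{2}\binom{w}{1}^{\tilde t}$-type terms give $\frac{\tilde t}{mL}\log\!\bigl[\binom{L-w}{2}\binom{w}{1}\bigr] \to \frac{\alpha}{L}\log\!\bigl[\binom{L-w}{2}\binom{w}{1}\bigr] = \bigl(\frac{\delta}{4}-\frac{1}{2L}\bigr)\log\!\bigl[\binom{L-w}{2}\binom{w}{1}\bigr]$, matching the second term of \eqref{eq:CSCC_SP_new}. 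The factor $(L-w)^{m-\tilde t}$ contributes $\frac{m-\tilde t}{mL}\log(L-w)\to \frac{1-\alpha}{L}\log(L-w) = \bigl(\frac{3}{2L}-\frac{\delta}{4}\bigr)\log(L-w)$, the fourth term. Finally, for the binomial coefficient $\binom{m}{\tilde t}$ I would invoke the standard entropy estimate $\frac{1}{m}\log\binom{m}{\tilde t}\to h(\alpha)$ (using $\tilde t/m\to\alpha$), so $\frac{1}{mL}\log\binom{m}{\tilde t}\to\frac{1}{L}h(\alpha) = \frac{1}{L}h\bigl(\frac{L\delta}{4}-\frac{1}{2}\bigr)$, giving the third term of \eqref{eq:CSCC_SP_new}. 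Collecting signs — the last three contributions appear in the denominator and hence with a minus sign — reproduces $\acute{\gamma}_{SP}(L,\delta,w/L)$ exactly.

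The main technical point, and the only place requiring care rather than bookkeeping, is the passage from $\log C(m,L,\lfloor mL\delta\rfloor,w)$ divided by $mL$ to the $\limsup$: I would bound $C(m,L,\lfloor mL\delta\rfloor,w)$ above by the right side of \eqref{eq:CSCC_SP_codesize_Gen} for each large $m$, take $\frac{1}{mL}\log$ of both sides, and then take $\limsup_{m\to\infty}$, using that the right-hand side converges (not merely has a $\limsup$) to $\acute{\gamma}_{SP}(L,\delta,w/L)$ by the computation above; since the bounding sequence converges, its $\limsup$ equals its limit, and $\limsup$ is monotone under the pointwise inequality, yielding $\gamma(L,\delta,w/L)=\limsup_m \frac{\log C}{mL}\le \acute{\gamma}_{SP}(L,\delta,w/L)$. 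The one subtlety is the $O(1)$ fluctuation of $\tilde t$ around $\alpha m$ caused by the nested floors: I would note that replacing $\tilde t$ by $\tilde t\pm O(1)$ changes each of the three denominator terms by a factor that is $\exp(O(\log m))$ at worst for $\binom{m}{\tilde t}$ (continuity of $h$ handles this: $h(\tilde t/m)\to h(\alpha)$) and $\exp(O(1))$ for the other two, so after dividing by $mL$ all such corrections vanish in the limit. No further obstacle is anticipated; the argument is essentially a clean limit computation once the asymptotics $\tilde t\sim\alpha m$ are in hand.
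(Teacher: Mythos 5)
Your proposal is correct and follows essentially the same route as the paper: the paper's proof likewise combines the code-size bound of Proposition~\ref{prop:CSCC_SP_codesize_Gen} with the rate definition \eqref{eq:CSCC_Rate_Def}, establishing exactly the three limits $\frac{1}{mL}\log\binom{m}{\tilde t}\to\frac{1}{L}h(L\delta/4-1/2)$, $\frac{\tilde t}{mL}\to\frac{\delta}{4}-\frac{1}{2L}$, and $\frac{m-\tilde t}{mL}\to\frac{3}{2L}-\frac{\delta}{4}$ that you derive. Your write-up is in fact more careful than the paper's on the floor-function fluctuations and on verifying that $0<L\delta/4-1/2<1$ under the stated range of $\delta$, but there is no substantive difference in approach.
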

\begin{IEEEproof}
	We will combine \eqref{eq:CSCC_Rate_Def} and \eqref{eq:CSCC_SP_codesize_Gen} to prove the theorem. Towards this, note that when $d$ scales as $d = \floor{mL\delta}$, and $\tilde{t} = \floor{(t-m)/2}$ with $t = \floor{(d-1)/2}$, then we have
	\begin{align}
	\limsup_{m \to \infty} \frac{1}{mL} \log \left[\binom{m}{\tilde{t}}\right] &= \frac{1}{L} h\left( \frac{L\delta}{4} - \frac{1}{2}\right) , \label{eq:CSCC_GenSP_step1} \\
	\limsup_{m \to \infty} \frac{\tilde{t}}{mL} &= \left(\frac{\delta}{4} - \frac{1}{2L}\right) , \label{eq:CSCC_GenSP_step2} \\
	\limsup_{m \to \infty} \frac{m-\tilde{t}}{mL} &= \left(\frac{3}{2L} - \frac{\delta}{4} \right) . \label{eq:CSCC_GenSP_step3}
	\end{align}
	The proof is now complete by combining \eqref{eq:CSCC_GenSP_step1}, \eqref{eq:CSCC_GenSP_step2}, \eqref{eq:CSCC_GenSP_step3}, with \eqref{eq:CSCC_Rate_Def} and \eqref{eq:CSCC_SP_codesize_Gen}.
\end{IEEEproof}

\begin{figure}[t]
	\centering
	\includegraphics[width=0.45\textwidth]{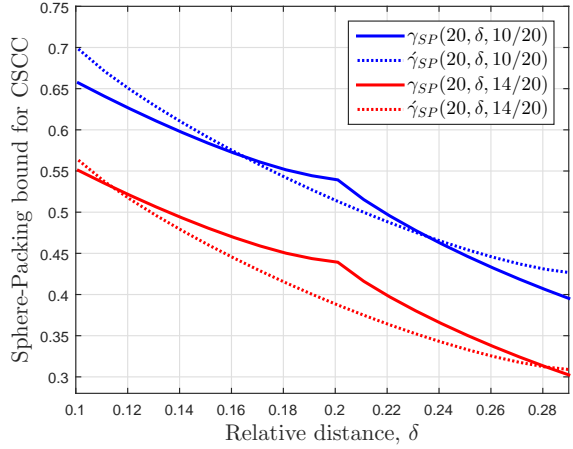}
	\caption{Comparison of sphere-packing upper bounds for the CSCC asymptotic rate $\gamma(L,\delta,w/L)$ for $L=20$.}
	\label{Fig:CSCC_GenSP}
\end{figure}

\begin{proposition} \label{prop:CSCC_SP_new_better}
	For $L/2 \le w < L-1$ and $\delta = 4/L$, we have $$\acute{\gamma}_{SP}(L,\delta,w/L) < \gamma_{SP}(L,\delta,w/L)$$
\end{proposition}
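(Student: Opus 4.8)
The plan is to evaluate both bounds at the single point $\delta = 4/L$ and show the new bound is strictly smaller by a direct term-by-term comparison. First I would substitute $\delta = 4/L$ into \eqref{eq:CSCC_SP_new}. At this value, $L\delta/4 - 1/2 = 1/2$, so $h(L\delta/4 - 1/2) = h(1/2) = 1$; also $\delta/4 - 1/(2L) = 1/(2L)$ and $3/(2L) - \delta/4 = 1/(2L)$. Hence
\begin{equation*}
\acute{\gamma}_{SP}(L,4/L,w/L) = \frac{1}{L}\log\binom{L}{w+1} - \frac{1}{2L}\log\left[\binom{L-w}{2}\binom{w}{1}\right] - \frac{1}{L} - \frac{1}{2L}\log(L-w).
\end{equation*}
Next I would substitute $\delta = 4/L$ into \eqref{eq:CSCC_SP_Arxiv}. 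Here $\tilde u = \delta L/4 = 1$, so $\lceil\tilde u\rceil = \lfloor\tilde u\rfloor = \tilde u = 1$, which collapses the fractional-offset terms: the coefficients $(\lceil\tilde u\rceil - \tilde u)/L$ vanish, $(1 + \tilde u - \lceil\tilde u\rceil)/L = 1/L$, and $h(\lceil\tilde u\rceil - \tilde u) = h(0) = 0$. This gives
\begin{equation*}
\gamma_{SP}(L,4/L,w/L) = \frac{1}{L}\log\binom{L}{w} - \frac{1}{L}\log\binom{w}{1} - \frac{1}{L}\log\binom{L-w}{1} = \frac{1}{L}\log\binom{L}{w} - \frac{1}{L}\log w - \frac{1}{L}\log(L-w).
\end{equation*}

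Then the inequality $\acute{\gamma}_{SP} < \gamma_{SP}$ reduces, after multiplying by $L$ and rearranging, to showing
\begin{equation*}
\log\binom{L}{w+1} - \log\binom{L}{w} < 1 + \frac{1}{2}\log\left[\binom{L-w}{2}\binom{w}{1}\right] + \frac{1}{2}\log(L-w) - \log w - \log(L-w).
\end{equation*}
I would simplify the left side using $\binom{L}{w+1}/\binom{L}{w} = (L-w)/(w+1)$, and on the right side expand $\binom{L-w}{2} = (L-w)(L-w-1)/2$ and $\binom{w}{1} = w$. The right side becomes $1 + \tfrac12\log\big((L-w)(L-w-1)w/2\big) + \tfrac12\log(L-w) - \log w - \log(L-w) = 1 + \tfrac12\log\big((L-w-1)/(2w)\big)$. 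So the claim is equivalent to $\log\big((L-w)/(w+1)\big) < 1 + \tfrac12\log\big((L-w-1)/(2w)\big)$, i.e. (exponentiating base $2$, assuming $\log=\log_2$; the argument is identical for any base after absorbing constants) $(L-w)/(w+1) < 2\sqrt{(L-w-1)/(2w)} = \sqrt{2(L-w-1)/w}$. Squaring, it remains to verify $(L-w)^2 / (w+1)^2 < 2(L-w-1)/w$, i.e. $w(L-w)^2 < 2(L-w-1)(w+1)^2$, over the range $L/2 \le w \le L-2$ (the hypothesis $w < L-1$ forces $w \le L-2$).

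The main obstacle — and the only real work — will be verifying this last polynomial inequality $w(L-w)^2 < 2(L-w-1)(w+1)^2$ on the stated range. I would handle it by setting $k = L - w$, so $2 \le k \le L/2$ and correspondingly $w = L - k \ge L/2 \ge k$, reducing it to $(L-k)k^2 < 2(k-1)(L-k+1)^2$; since $L - k + 1 > L - k \ge k > k-1 \ge 1$ is not quite enough by itself, I would instead note $w \ge k$ gives $(w+1)^2 \ge (k+1)^2$ and bound $w(L-w)^2 = wk^2$ against $2(k-1)(w+1)^2 \ge 2(k-1)(k+1)^2 = 2(k-1)(k^2+2k+1)$; comparing $wk^2$ with this requires care near $k=2$, so I would treat small $k$ ($k=2,3$) by hand and handle $k \ge 4$ via the cruder estimate together with $w \le L - 2$, or alternatively verify the inequality as a quadratic in $L$ for each fixed admissible ratio. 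Either route is elementary; the bookkeeping at the boundary $w = L-2$ (where $L-w-1 = 1$) is where one must be most careful, and I would check that edge case separately to be safe.
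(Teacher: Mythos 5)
Your evaluation of both bounds at $\delta=4/L$ is correct and coincides with the paper's: you obtain the same two expressions, and your reduction to the polynomial inequality $w(L-w)^2 < 2(L-w-1)(w+1)^2$ on $L/2\le w\le L-2$ is exactly the comparison the paper makes (it phrases the two sides as $\frac{1}{L}\log\binom{L}{w}-\frac{1}{L}\log\bigl((L-w)w\bigr)$ versus $\frac{1}{L}\log\binom{L}{w}-\frac{1}{2L}\log\bigl(2(w+1)^2(L-w-1)w\bigr)$). The gap is in how you propose to verify that final inequality. Writing $k=L-w$, the target is $wk^2 < 2(k-1)(w+1)^2$, and your main estimate replaces $(w+1)^2$ by $(k+1)^2$ via $w\ge k$, reducing the task to comparing $wk^2$ with $2(k-1)(k+1)^2$. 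That comparison is false in general: the left side grows linearly in $w$ while your lower bound for the right side no longer depends on $w$ at all. For instance $k=2$, $w=100$, $L=102$ gives $wk^2=400$ against $2(k-1)(k+1)^2=18$. Checking $k\in\{2,3\}$ by hand does not rescue this, because the failure occurs for \emph{every} fixed $k$ once $w$ is large; the $w$-dependence on the right must be retained.

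The repair is one line and is precisely the paper's key observation. Keep the factor $w^2$: since $w\ge L/2\ge k\ge 2$, we have $2(k-1)w\ge 2(k-1)k\ge k^2$, hence
\begin{equation*}
2(k-1)(w+1)^2 > 2(k-1)w^2 = \bigl(2(k-1)w\bigr)\,w \ge k^2 w ,
\end{equation*}
with strictness supplied by $(w+1)^2>w^2$. In the original variables this is the single inequality $2w(L-w-1)\ge (L-w)^2$ for $L/2\le w< L-1$, after which $w(L-w)^2\le 2w^2(L-w-1)<2(w+1)^2(L-w-1)$ closes the argument. With that substitution your proof is complete and is essentially identical to the paper's.
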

\begin{IEEEproof}
	When $\delta = 4/L$, using \eqref{eq:CSCC_SP_Arxiv} we get
	\begin{equation} \label{eq:CSCC_SP_Arxiv_delta_fix}
	\gamma_{SP}(L,4/L,w/L) = \frac{1}{L} \log \binom{L}{w} - \frac{1}{L} \log \left((L-w)w\right).
	\end{equation}
	On the other hand, using \eqref{eq:CSCC_SP_new} we observe that $\acute{\gamma}_{SP}(L,4/L,w/L)$ is equal to
	\begin{align}
	\frac{1}{L} &\log \binom{L}{w+1} - \frac{1}{2L} \log \left(2 (L-w)^2 (L-w-1) w \right) \nonumber \\
	=\frac{1}{L} &\log \binom{L}{w} - \frac{1}{2L} \log \left(2 (w+1)^2 (L-w-1) w \right). \label{eq:CSCC_SP_new_delta_fix}
	\end{align}
	The proposition is now proved by comparing \eqref{eq:CSCC_SP_Arxiv_delta_fix} and \eqref{eq:CSCC_SP_new_delta_fix}, and observing that $2w (L-w-1) \ge (L-w)^2$ when $L/2 \le w < L-1$.
\end{IEEEproof}
\emph{Remark}: As $\acute{\gamma}_{SP}(L,\delta,w)$ and $\gamma_{SP}(L,\delta,w)$ are both continuous functions of $\delta$, we observe that Prop.~\ref{prop:CSCC_SP_new_better} implies that for a certain interval around $\delta = 4/L$, the upper bound on the CSCC asymptotic rate given by $\acute{\gamma}_{SP}(L,\delta,w)$ is an improved upper bound on the CSCC rate compared to $\gamma_{SP}(L,\delta,w)$. This is depicted in Fig.~\ref{Fig:CSCC_GenSP} for the case where $L=20$ and $w \in \{10,14\}$. Fig.~\ref{Fig:CSCC_GenSP} shows that $\acute{\gamma}_{SP}(L,\delta,w) < \gamma_{SP}(L,\delta,w)$ for a range of $\delta$ values around $\delta = 4/L = 0.2$.

\subsection{Subblock Energy-Constrained Codes} \label{sec:SECC_ImprovedSPB}
We provide an upper bound on the asymptotic SECC rate when the number of subblocks $m$ tends to infinity, minimum distance $d$ scales linearly with $m$, and parameters $L$, $w$ are fixed. Formally,  for fixed $0 < \delta < 1$, the asymptotic rate for SECCs is defined as
\begin{equation} \label{eq:SECC_RateDef1}
\sigma(L,\delta,w/L) \triangleq \limsup_{m\to \infty} \frac{\log S\left( m,L,\floor{mL\delta},w\right)}{mL}. 
\end{equation}
Further, we introduce the notation $\binom{L}{\ge w}$ which we define as
\begin{equation*}
\binom{L}{\ge w} \triangleq \sum_{j=w}^L\binom{L}{j} .
\end{equation*}

First, we apply Theorem~\ref{thm:ISITA} to present an upper bound on the optimal code size for SECCs.
\begin{proposition} \label{prop:SECC_SP_optimize_m0}
	For $d \le 2m+1$, $0\le m_0\le m$, and $t = \floor{(d-1)/2}$, we have
	\begin{equation} \label{eq:SECC_SP_optimize_m0}
	S(m,L,d,w) \le \frac{\displaystyle \binom{L}{\ge w-1}^{m_0} \binom{L}{\ge w}^{m-m_0}}
	{\displaystyle \sum_{\substack{t_1, t_2 \\ t_1 + t_2 \le t}} \binom{m_0}{t_1} \binom{m-m_0}{t_2} L^{t_1} (L-w)^{t_2}} .
	\end{equation}
\end{proposition}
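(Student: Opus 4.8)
The plan is to invoke Theorem~\ref{thm:ISITA} with a judiciously enlarged space $\tilde{\cS}\supseteq \cS$, where $\cS=\cS(m,L,w)$. Reading off the numerator of \eqref{eq:SECC_SP_optimize_m0}, the natural choice is to designate $m_0$ of the $m$ subblocks as ``relaxed'' and to let $\tilde{\cS}$ be the set of all length-$mL$ words whose $m_0$ relaxed subblocks each have weight at least $w-1$ and whose remaining $m-m_0$ subblocks each have weight at least $w$. The particular choice of relaxed subblocks does not affect the bound, $\cS\subseteq\tilde{\cS}$, and $|\tilde{\cS}|=\binom{L}{\ge w-1}^{m_0}\binom{L}{\ge w}^{m-m_0}$, matching the numerator. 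By Theorem~\ref{thm:ISITA} it then remains to show that $V^{\min}_{\cS,\tilde{\cS}}(t)$ is at least the denominator of \eqref{eq:SECC_SP_optimize_m0}.

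The heart of the argument is a lower bound on $|\cB_{\tilde{\cS}}(\bfx,t)|$ that is uniform over $\bfx\in\cS$. Fix such an $\bfx$; every subblock of $\bfx$ has weight at least $w$. I would produce elements of $\cB_{\tilde{\cS}}(\bfx,t)$ by choosing nonnegative integers $t_1,t_2$ with $t_1+t_2\le t$, then choosing $t_1$ of the relaxed and $t_2$ of the non-relaxed subblocks, and altering each chosen subblock of $\bfx$ in exactly one coordinate while leaving the others untouched. In a relaxed subblock every single-bit flip is admissible (flipping one bit of a weight-$\ge w$ string produces weight $\ge w-1$), giving $L$ choices; in a non-relaxed subblock we must preserve weight $\ge w$, and a short case check on the subblock weight $s\ge w$ shows the number of admissible single-bit flips is $L$ if $s\ge w+1$ and $L-w$ if $s=w$, hence at least $L-w$ in all cases. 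Each resulting word $\bfy$ lies in $\tilde{\cS}$ with $\tau(\bfx,\bfy)=t_1+t_2\le t$, and distinct selections yield distinct words with no overcounting, since $\bfy$ determines $(t_1,t_2)$, the altered subblocks, and the flips. Summing over all valid $(t_1,t_2)$ and selections gives
\[
|\cB_{\tilde{\cS}}(\bfx,t)|\ \ge\ \sum_{\substack{t_1,t_2\\ t_1+t_2\le t}} \binom{m_0}{t_1}\binom{m-m_0}{t_2}\,L^{t_1}(L-w)^{t_2},
\]
which is exactly the denominator. Since the right-hand side is independent of $\bfx$ and is at least $1$ (the $t_1=t_2=0$ term, i.e.\ $\bfy=\bfx$), it also lower-bounds $V^{\min}_{\cS,\tilde{\cS}}(t)$, and Theorem~\ref{thm:ISITA} delivers \eqref{eq:SECC_SP_optimize_m0}.

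I expect the one delicate point to be the count of admissible flips in a non-relaxed subblock: one must notice that the binding case is $s=w$, where only the $L-w$ zero-to-one flips survive, so $L-w$ (not $L-w-1$, and not $L$) is the correct uniform floor, and getting this slightly wrong would break the match with the stated denominator. The hypotheses $0\le m_0\le m$ and $d\le 2m+1$ (equivalently $t\le m$) are used only to ensure the displayed sum is the natural bound — in particular that the budget $t_1+t_2\le t$ is the operative constraint rather than the trivial caps $t_1\le m_0$, $t_2\le m-m_0$; the remaining steps are routine bookkeeping.
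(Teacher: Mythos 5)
Your proposal is correct and follows essentially the same route as the paper: it applies Theorem~\ref{thm:ISITA} with $\tilde{\cS}$ taken to be the space in which $m_0$ designated subblocks are relaxed to weight at least $w-1$, and lower-bounds the ball size by counting words obtained from $\bfx$ via single-bit flips in $t_1$ relaxed and $t_2$ unrelaxed subblocks, with $L$ and $L-w$ as the respective per-subblock floors. Your explicit justification of the $L-w$ floor (the binding case being a subblock of weight exactly $w$) and of the injectivity of the construction is a welcome bit of extra care, but the argument is the same.
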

\begin{IEEEproof}
	We will apply Theorem~\ref{thm:ISITA}, and choose $\tilde{\cS} \subset \{0,1\}^{mL}$ to be the space
	where the first $m_0$ subblocks have weight at least $w-1$, and the remaining $m-m_0$ subblocks
	have weight at least $w$, with fixed subblock length $L$. Thus
	$|\tilde{\cS}| = \binom{L}{\ge w-1}^{m_0} \binom{L}{\ge w}^{m-m_0}$,
	and using Theorem~\ref{thm:ISITA}, it suffices to show that
	\begin{equation} \label{eq:SECC_BallSize_m0}
	V^{\min}_{\cS,\tilde{\cS}}(t) \ge \sum_{\substack{t_1, t_2 \\ t_1 + t_2 \le t}} \binom{m_0}{t_1} \binom{m-m_0}{t_2} L^{t_1} (L-w)^{t_2} .
	\end{equation}
	For $\bfx \in \cS(m,L,w)$, let $\bfx_{[i]}$ denote the $i$th subblock of $\bfx$, and hence
	$\bfx = (\bfx_{[1]} \bfx_{[2]} \ldots \bfx_{[m]})$. Let $\Lambda_{\bfx}$ be defined as
	\begin{equation*}
	\Lambda_{\bfx} \triangleq \{\bfy \in \tilde{\cS} : \tau(\bfx,\bfy) \le t, \tau(\bfx_{[i]},\bfy_{[i]}) \le 1, i \in \{1,\ldots,m\} \}.
	\end{equation*}
	Let $\bfy \in \tilde{\cS}$ be such that $t_1$ (resp. $t_2$) subblocks out of the first $m_0$ (resp. last $m-m_0$) subblocks of $\bfy$ differ in exactly one bit from corresponding subblocks of $\bfx$, with $t_1 + t_2 \le t$. Then $\bfy \in \Lambda_{\bfx}$, and 
	\begin{equation*}
	|\Lambda_{\bfx}| \ge \sum_{\substack{t_1, t_2 \\ t_1 + t_2 \le t}} \binom{m_0}{t_1} \binom{m-m_0}{t_2} L^{t_1} (L-w)^{t_2} .
	\end{equation*}
	Note that the inequality above holds for every $\bfx \in \cS(m,L,w)$. Finally, the inequality in \eqref{eq:SECC_BallSize_m0} follows because $\Lambda_{\bfx} \subseteq \cB_{\tilde{\cS}}(\bfx,t)$ for every $\bfx \in \cS(m,L,w)$.
\end{IEEEproof}

The following theorem gives an upper bound on the SECC rate $\sigma(L,\delta,w/L)$.
\begin{theorem} \label{thm:SECC_Gen_SP}
	For $0 < \delta < 2/L$, we have
	\begin{equation}
	\sigma(L,\delta,w/L) \le R_1 - \hat{\alpha} \nu , \label{eq:SECC_Gen_SP}
	\end{equation}
	where
	\begin{align}
	R_1 &\triangleq \frac{\log \binom{L}{\ge w}}{L} - \frac{h(\delta L/2)}{L} - \frac{\delta}{2} \log (L-w) \label{eq:R1_def}\\
	h(x) &\triangleq -x \log(x) - (1-x) \log(1-x) \nonumber \\
	\nu &\triangleq \frac{\delta}{2} \log\left(\frac{L}{L-w}\right) - 
	\frac{1}{L} \log\left[\frac{\binom{L}{\ge w -1}}{\binom{L}{\ge w}}\right] \label{eq:R2_def} \\
	\hat{\alpha} &\triangleq \begin{cases}
	0, \mathrm{~if~} \nu \le 0 \\
	1, \mathrm{~if~} \nu > 0
	\end{cases} \label{eq:alpha_hat_def}
	\end{align}
\end{theorem}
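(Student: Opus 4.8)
The plan is to apply Proposition~\ref{prop:SECC_SP_optimize_m0} at just the two extreme choices of its free parameter, $m_0 = 0$ and $m_0 = m$, then take logarithms, normalize by $mL$, and pass to the $\limsup$ in $m$. Set $d = \floor{mL\delta}$ and $t = \floor{(d-1)/2}$. Since $\delta < 2/L$ we have $d \le mL\delta < 2m \le 2m+1$ for every $m$, so the hypothesis of Proposition~\ref{prop:SECC_SP_optimize_m0} is met; moreover $t \le (d-1)/2 < m$, so $\binom{m}{t}$ is a bona fide term of the denominator sums appearing below. Finally, as $m\to\infty$ one has $t/m \to L\delta/2 \in (0,1)$ and $t/(mL) \to \delta/2$, which is all we need for the entropy asymptotics.

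First take $m_0 = 0$, which corresponds to $\tilde{\cS} = \cS$. The numerator in \eqref{eq:SECC_SP_optimize_m0} is $\binom{L}{\ge w}^m$, and the denominator sum collapses to $\sum_{t_2 \le t}\binom{m}{t_2}(L-w)^{t_2} \ge \binom{m}{t}(L-w)^t$ (keeping only the $t_2 = t$ term). Hence $S(m,L,d,w) \le \binom{L}{\ge w}^m \big/ \big(\binom{m}{t}(L-w)^t\big)$. Dividing by $mL$, using $\tfrac1m\log\binom{m}{t} \to h(L\delta/2)$ (Stirling, since $t/m \to L\delta/2 \in (0,1)$) and $\tfrac{t}{mL}\to\delta/2$, the $\limsup$ gives $\sigma(L,\delta,w/L) \le \tfrac1L\log\binom{L}{\ge w} - \tfrac{h(L\delta/2)}{L} - \tfrac{\delta}{2}\log(L-w) = R_1$. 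Next take $m_0 = m$: now the numerator is $\binom{L}{\ge w-1}^m$ and the denominator collapses to $\sum_{t_1 \le t}\binom{m}{t_1}L^{t_1} \ge \binom{m}{t}L^t$; the identical asymptotic computation yields $\sigma(L,\delta,w/L) \le \tfrac1L\log\binom{L}{\ge w-1} - \tfrac{h(L\delta/2)}{L} - \tfrac{\delta}{2}\log L$, and a short rearrangement using the definitions of $R_1$ and $\nu$ shows the right-hand side equals $R_1 - \nu$.

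Combining the two bounds, $\sigma(L,\delta,w/L) \le \min\{R_1,\, R_1 - \nu\}$. If $\nu \le 0$ this minimum is $R_1 = R_1 - \hat{\alpha}\nu$ with $\hat{\alpha}=0$; if $\nu > 0$ it is $R_1 - \nu = R_1 - \hat{\alpha}\nu$ with $\hat{\alpha}=1$. Either way it is exactly the claimed bound, so the proof is complete. (One could also let $m_0 = \alpha m$ for intermediate $\alpha\in(0,1)$, but since the theorem asserts only the value $R_1 - \hat\alpha\nu$, the two endpoint choices already suffice.)

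There is no deep obstacle here; the care points are (a) checking the hypotheses of Proposition~\ref{prop:SECC_SP_optimize_m0} under the chosen scaling, which is exactly where $\delta < 2/L$ enters; (b) lower-bounding each denominator sum by a single well-chosen term and evaluating its exponential growth rate via Stirling's approximation as $t/m \to L\delta/2$; and (c) the bookkeeping that identifies the $m_0 = m$ bound with $R_1 - \nu$. The only conceptual step is recognizing that the parameter $m_0$, evaluated at its two extremes, already produces precisely $R_1$ and $R_1 - \nu$, so that $\hat{\alpha}$ is merely a device recording which of the two is smaller.
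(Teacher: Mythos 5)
Your proposal is correct and follows essentially the same route as the paper: both apply Proposition~\ref{prop:SECC_SP_optimize_m0} with the split parameter $m_0$, keep a single dominant term of the denominator sum, and pass to the limit via the entropy asymptotics of $\binom{m}{t}$. The only cosmetic difference is that the paper works with general $\alpha = m_0/m$ to obtain $R_1 - \alpha\nu$ and then notes the linear function is minimized at an endpoint, whereas you evaluate the two endpoints $m_0 \in \{0, m\}$ directly — which, as you observe, already yields $\min\{R_1, R_1-\nu\} = R_1 - \hat{\alpha}\nu$.
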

\begin{IEEEproof}
	For $0 \le m_0 \le m$, let $\alpha = m_0/m$ with $t_1 = \floor{t \alpha}$ and $t_2 = \floor{t (1-\alpha)}$. Then $t_1 + t_2 \le t$, and it follows from \eqref{eq:SECC_SP_optimize_m0} that 
	{\small{\begin{equation} \label{eq:SECC_SP_optimize_m0_v2}
			S(m,L,d,w) \le \frac{\displaystyle \binom{L}{\ge w-1}^{m_0} \binom{L}{\ge w}^{m-m_0}}
			{\displaystyle  \binom{m_0}{\floor{t \alpha}} \binom{m-m_0}{\floor{t (1-\alpha)}} L^{\floor{t \alpha}} (L-w)^{\floor{t (1-\alpha)}} } ,
			\end{equation}}}
	Combining~\eqref{eq:SECC_RateDef1} and \eqref{eq:SECC_SP_optimize_m0_v2}, we get
	\small
	\begin{align*}
	\sigma(L,\delta,w/L) &\le \frac{\alpha}{L}\log \binom{L}{\ge w-1} + \frac{1-\alpha}{L}\log \binom{L}{\ge w} \\
	&- \frac{\alpha}{L} h\left(\frac{\delta L}{2}\right) - \frac{(1-\alpha)}{L} h\left(\frac{\delta L}{2}\right) \\
	&- \frac{\alpha \delta}{2} \log(L) - \frac{(1-\alpha) \delta}{2} \log (L-w) .
	\end{align*}
	\normalsize
	By combining the coefficients of $\alpha$, the above inequality can be expressed as
	\begin{equation} \label{eq:SECC_GenSP_Bound1}
	\sigma(L,\delta,w/L) \le R_1 - \alpha \nu ,
	\end{equation}
	where $R_1$ and $\nu$ are given by \eqref{eq:R1_def} and \eqref{eq:R2_def}, respectively. 
	The above bound on SECC rate holds for all $\alpha \in [0,1]$,
	and hence the right side in~\eqref{eq:SECC_GenSP_Bound1} is minimized by choosing $\alpha = \hat{\alpha}$,
	with $\hat{\alpha}$ given by~\eqref{eq:alpha_hat_def}.
\end{IEEEproof}

We observe that the upper bound on $\sigma(L,\delta,w/L)$, given by Theorem~\ref{thm:SECC_Gen_SP}, can equivalently be expressed as
\begin{equation*}
\sigma(L,\delta,w/L) \le \min\{R_1, R_1-\nu\} ,
\end{equation*}
where $R_1$ corresponds to the sphere packing bound on the asymptotic rate $\sigma(L,\delta,w/L)$ when space $\tilde{\cS}$ is chosen to be $\tilde{\cS} = \cS(m,L,w)$, while $R_1-\nu$ corresponds to the sphere packing bound when space $\tilde{\cS}$ is chosen to be $\tilde{\cS} = \cS(m,L,w-1)$.
\begin{corollary} \label{cor:SECC_SP_new_bound_better}
	$R_1-\nu$, the upper bound on SECC rate obtained by choosing $\tilde{\cS} = \cS(m,L,w-1)$, is less than $R_1$, the upper bound on SECC rate obtained by choosing $\tilde{\cS} = \cS(m,L,w)$, for the following range of $\delta$ values
	\begin{equation}
	\frac{2}{L \log [L/(L-w)]} \log\left[\frac{\binom{L}{\ge w -1}}{\binom{L}{\ge w}}\right] < \delta < \frac{2}{L} . \label{eq:SECC_delta_range_new_better}
	\end{equation}	
\end{corollary}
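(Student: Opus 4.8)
The plan is to recognize that the inequality $R_1 - \nu < R_1$ is simply the statement $\nu > 0$, and to solve this scalar inequality for $\delta$. Using the definition of $\nu$ in \eqref{eq:R2_def}, the condition $\nu > 0$ reads
\[
\frac{\delta}{2}\log\!\left(\frac{L}{L-w}\right) \;>\; \frac{1}{L}\log\!\left[\frac{\binom{L}{\ge w-1}}{\binom{L}{\ge w}}\right].
\]
Since $0 < w < L$ we have $L/(L-w) > 1$, so $\log(L/(L-w)) > 0$; dividing both sides by $\tfrac{1}{2}\log(L/(L-w))$ therefore preserves the direction of the inequality and yields exactly the left endpoint of the range in \eqref{eq:SECC_delta_range_new_better}. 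The upper constraint $\delta < 2/L$ is inherited from the hypothesis $d \le 2m+1$ of Proposition~\ref{prop:SECC_SP_optimize_m0}; it is needed only so that both $R_1$ and $R_1-\nu$ are genuine sphere-packing upper bounds on $\sigma(L,\delta,w/L)$ — the former being the value of \eqref{eq:SECC_GenSP_Bound1} at $\alpha = 0$ (equivalently $m_0 = 0$, $\tilde{\cS} = \cS(m,L,w)$) and the latter at $\alpha = 1$ (equivalently $m_0 = m$, $\tilde{\cS} = \cS(m,L,w-1)$).

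For completeness I would also verify that the interval in \eqref{eq:SECC_delta_range_new_better} is nonempty, so that the corollary is not vacuous. This reduces to checking $\binom{L}{\ge w-1}/\binom{L}{\ge w} < L/(L-w)$, which follows from the splitting $\binom{L}{\ge w-1} = \binom{L}{\ge w} + \binom{L}{w-1}$ together with $\binom{L}{\ge w} \ge \binom{L}{w}$ and the elementary identity $w\binom{L}{w} = (L-w+1)\binom{L}{w-1} > (L-w)\binom{L}{w-1}$.

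The whole argument is a rearrangement of a single inequality, so there is no substantive obstacle; the only point needing attention is the sign of $\log(L/(L-w))$, which is positive precisely because $w < L$ (the boundary case $w = L$, where $\cS(m,L,L)$ is trivial, being implicitly excluded in the SECC regime of interest $L/2 \le w < L$).
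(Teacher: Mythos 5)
Your argument is correct and is essentially the paper's own proof, which simply states that the corollary ``follows from \eqref{eq:R2_def}'': the inequality $R_1-\nu<R_1$ is equivalent to $\nu>0$, and rearranging that scalar inequality (using $\log(L/(L-w))>0$ for $w<L$) gives exactly the stated range of $\delta$. Your additional check that the interval is nonempty, via $\binom{L}{\ge w-1}/\binom{L}{\ge w}<L/(L-w)$, is a correct and worthwhile supplement that the paper omits.
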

\begin{IEEEproof}
	Follows from  \eqref{eq:R2_def}.
\end{IEEEproof}

An alternate sphere-packing bound was presented in~\cite{TandonKM18_IT}, where it was shown that $\sigma(L,\delta,w/L)$, the asymptotic rate for SECCs, is upper bounded by 
\begin{equation}
\sigma_{SP} \triangleq \frac{\log \binom{L}{\ge w}}{L} - \frac{h(\delta L/4)}{L}  - \frac{\delta}{4} \log \left((L-w)(w+1)\right) . \label{eq:SECC_SP_rate_old}
\end{equation}

\begin{figure}[t]
	\centering
	\includegraphics[width=0.45\textwidth]{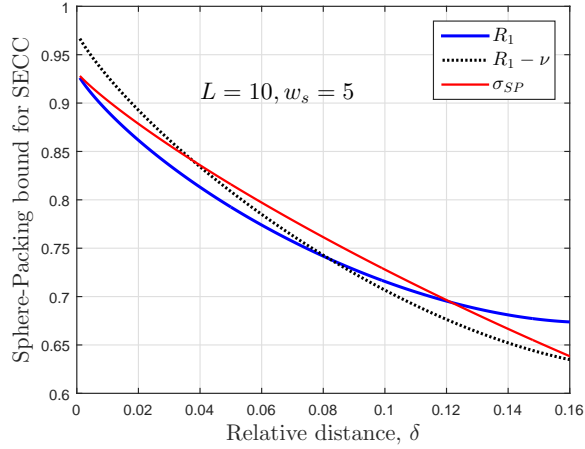}
	\caption{Comparison of sphere-packing upper bounds for the SECC asymptotic rate $\sigma(L,\delta,w/L)$.}
	\label{Fig:SECC_GenSP}
\end{figure}

Fig.~\ref{Fig:SECC_GenSP} compares different sphere-packing bounds for the SECC asymptotic rate $\sigma(L,\delta,w/L)$ as a function of $\delta$ with fixed $L=10$, and $w=5$. As shows in Cor.~\ref{cor:SECC_SP_new_bound_better}, it is observed in Fig.~\ref{Fig:SECC_GenSP} that the upper bound given by $R_1-\nu$ is less than $R_1$ for $\delta > \frac{2}{L \log [L/(L-w)]} \log\left[\frac{\binom{L}{\ge w -1}}{\binom{L}{\ge w}}\right] = 0.0821$.

\newpage

\section{Concluding Remarks}
We study the generalized sphere-packing bound for two classes of subblock-constrained codes, 
namely, CSCCs and SECCs. 
Using automorphisms, we significantly reduce the number of variables in 
the associated linear programming problem.

For CSCCs, to determine the upper bound for $C(m,L,d,w)$, 
we show that the generalised sphere-packing bound can be obtained by finding the minimum of $N(t)$ values, where $t=\floor{(d-1)/2}$ and $N(t)$ is independent of $m$, $L$ and $w$.
We then provide closed-form solutions for the generalized sphere-packing bounds for single- and double-error correcting CSCCs in Corollary~\ref{cor:cscc}.

In contrast, for SECCs, the generalised sphere-packing bound for $S(m,L,d,w)$ is obtained via a linear program involving at most $L^mN(t)$ variables.
Nevertheless, in the special cases, we solved the linear program and closed-form solutions are provided in Propositions~\ref{prop:secc1} and \ref{prop:secc2}.

{Further, we extended the results in~\cite{TandonKM18_IT} to present improved upper bounds on the asymptotic rates for both CSCCs and SECCs for a range of relative distance values.}

\appendices

\section{Upper Bound on the Number of Orbits}\label{app:Nt}

In this appendix, we prove Propositions~\ref{prop:Nt} and~\ref{prop:Nt-secc}.

Recall that $\cP(m,L;\ws,t)\triangleq \{\bfu\in\cP(m,L): \sum_{i=1}^m |u_i-\ws|\le t\}$
and our task is to provide an upper bound on the size of $\cP(m,L;\ws,t)$.

To this end, we consider the notion of partitions and partitions into parts of two kinds.
Specifically, for a fixed value of $t$, we say that a tuple $\lambda=(\ell_1,\ell_2,\ldots, \ell_s)$ is a {\em partition} of $t$ 
if $\ell_1\ge \ell_2\ge \cdots \ge \ell_s>0$ and $\sum_{i=1}^s\ell_i=t$. 
A pair $(\lambda_1,\lambda_2)$ is a {\em partition of $t$ into parts of two kinds} if 
$\lambda_i$ is a partition of $t_i$ for $i\in\{1,2\}$ and $t_1+t_2=t$.
We then set $\Lambda(t)$ to be the collection of all partitions of $t$ into parts of two kinds.
The size of $\Lambda(t)$ is (see for example, sequence A000712 -- \url{https://oeis.org/A000712}) as follow.
\begin{equation}\label{eq:Lambda}
|\Lambda(t)|=\sum_{i=0}^t p(i)p(t-i),
\end{equation}
\noindent where  $p(i)$ is the partition number of $i$. 

To give an upper bound on the size of $\cP(m,L;\ws,t)$, 
we define the map $\phi:\cP(m,L;\ws,t)\to \bigcup_{r=0}^t \Lambda(r)$.
For $\bfu\in \cP(m,L;\ws,t)$, set $(\ell_1,\ell_2,\ldots, \ell_m)=\bfu-\bfw$
where $\bfw=[w,w,\ldots, w]$. Then we find $i$ and $j$ such that 
$\ell_{i-1}>0$, $\ell_{j+1}<0$ and $\ell_{i}=\ell_{i+1}=\cdots=\ell_j=0$.
Finally, we let $\lambda_1=(\ell_1,\ell_2,\ldots, \ell_{i-1})$ and $\lambda_2=(-\ell_m,-\ell_{m-1},\ldots, -\ell_{j+1})$,
and set $\phi(\bfu)=(\lambda_1,\lambda_2)$.

\begin{lemma}\label{lem:app}
Let $\phi:\cP(m,L;\ws,t)\to \bigcup_{r=0}^t \Lambda(r)$ be defined as above.
Then $\phi$ is a well-defined injective map.
Furthermore, if $m\ge t$, we have that $\phi$ is a bijection.
\end{lemma}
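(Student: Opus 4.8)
The plan is to treat $\phi$ as a bijection onto its image by exhibiting an explicit inverse built from the shape of the difference vector $\bfu-\bfw$, and then to determine exactly when that image exhausts $\bigcup_{r=0}^{t}\Lambda(r)$. The injectivity (which already yields the inequality in Proposition~\ref{prop:Nt}) will come for free from this inverse, while the surjectivity is the place where the range of $\ws$ enters.

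First I would verify that $\phi$ is well-defined. The structural fact to exploit is that $\bfell\triangleq\bfu-\bfw$ is a non-increasing integer sequence of length $m$, since $\bfu$ is non-increasing and $\bfw=[\ws,\ldots,\ws]$ is constant. Hence the strictly positive coordinates of $\bfell$ occupy an initial run $\ell_1\ge\cdots\ge\ell_{i-1}>0$, the strictly negative ones a terminal run $0>\ell_{j+1}\ge\cdots\ge\ell_m$, and the coordinates indexed by $\{i,\ldots,j\}$ (possibly none) are all zero; this is precisely the splitting used to define $\phi$. It follows at once that $\lambda_1=(\ell_1,\ldots,\ell_{i-1})$ is a partition of $t_1\triangleq\sum_{\ell_k>0}\ell_k$ and $\lambda_2=(-\ell_m,\ldots,-\ell_{j+1})$ is a partition of $t_2\triangleq\sum_{\ell_k<0}(-\ell_k)$ — the latter is non-increasing precisely because $\bfell$ is — and since $t_1+t_2=\sum_k|\ell_k|\le t$ we get $\phi(\bfu)=(\lambda_1,\lambda_2)\in\Lambda(t_1+t_2)\subseteq\bigcup_{r=0}^{t}\Lambda(r)$.

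For injectivity I would write down the candidate inverse $\psi$: given a pair with $\lambda_1=(a_1,\ldots,a_p)$ and $\lambda_2=(b_1,\ldots,b_q)$, set $\psi(\lambda_1,\lambda_2)\triangleq\bfw+(a_1,\ldots,a_p,0,\ldots,0,-b_q,\ldots,-b_1)$, where the middle block contains exactly $m-p-q$ zeros. Because $m$ is a fixed parameter of the domain, the decomposition above shows $\psi(\phi(\bfu))=\bfu$ for every $\bfu\in\cP(m,L;\ws,t)$ (for such $\bfu$ one automatically has $p+q\le m$, as $\bfell$ has length $m$), so $\phi$ is injective with no extra hypothesis. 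Combined with $|\bigcup_{r=0}^{t}\Lambda(r)|=\sum_{r=0}^{t}|\Lambda(r)|$ and \eqref{eq:Lambda}, this already gives $|\cP(m,L;\ws,t)|\le N(t)$, i.e.\ Proposition~\ref{prop:Nt}.

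Finally, for the bijection claim when $m\ge t$, I would check that $\psi$ actually lands in $\cP(m,L;\ws,t)$. Given $(\lambda_1,\lambda_2)\in\Lambda(r)$ with $r\le t$, the vector in the definition of $\psi$ has a legitimate non-negative number $m-p-q\ge m-r\ge 0$ of interior zeros (using $p+q\le t_1+t_2=r\le t\le m$); the resulting $\bfu=\psi(\lambda_1,\lambda_2)$ is non-increasing by construction, satisfies $\sum_k|u_k-\ws|=r\le t$, and $\phi(\psi(\lambda_1,\lambda_2))=(\lambda_1,\lambda_2)$ is immediate. The one point needing care — and the main (mild) obstacle — is the box constraint $0\le u_k\le L$ defining $\cP(m,L)$: the extreme coordinates of $\bfu$ are $\ws+a_1$ and $\ws-b_1$ with $a_1\le r\le t$ and $b_1\le r\le t$, so this is exactly where I expect to invoke $t\le\min\{\ws,L-\ws\}$ to conclude $\ws+a_1\le L$ and $\ws-b_1\ge 0$ (that hypothesis already appears in the equality statement of Proposition~\ref{prop:Nt}, so the bijection is to be read under it). Granting this, $\psi$ is a two-sided inverse of $\phi$, hence $\phi$ is a bijection and $|\cP(m,L;\ws,t)|=N(t)$. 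Beyond this boundary check, everything is routine bookkeeping about the three runs of $\bfell$ and the count $m-p-q$ of interior zeros.
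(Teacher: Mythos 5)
Your proof is correct and follows essentially the same route as the paper: the same run-decomposition of $\bfu-\bfw$ for well-definedness, and the same explicit inverse (pad $\lambda_1$ with trailing zeros, pad the reversal of $\lambda_2$ with leading zeros, add $\bfw$) for injectivity and, when $m\ge t$, surjectivity. The one substantive difference is to your credit: you explicitly check the box constraint $0\le u_k\le L$ and observe that surjectivity also needs $t\le\min\{\ws,L-\ws\}$; the paper's proof of the lemma silently asserts that the reconstructed $\bfu$ lies in $\cP(m,L;\ws,t)$, and that hypothesis appears only in the equality clause of Proposition~\ref{prop:Nt}, not in the lemma statement itself, so your reading of the bijection claim as conditional on it is the right one.
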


\begin{proof}Let $\lambda_1$ and $\lambda_2$ be defined as above.
Since $u_1\ge u_2\ge \cdots\ge u_m$, we have that $\lambda_1$ and $\lambda_2$ are partitions.
Since $\sum_{i=1}^m |u_i-\ws|=\sum_{i=1}^m \ell_i=r$, 
we have that $(\lambda_1,\lambda_2)$ is a partition of $r$ into parts of two kinds. 
As $r\le t$, we have that $(\lambda_1,\lambda_2)$ belongs to $\bigcup_{r=0}^t \Lambda(r)$ and 
the map $\phi$ is therefore well-defined. 

To demonstrate injectivity, suppose that $(\lambda_1,\lambda_2)$ is the image $\phi(\bfu)$ and 
we find $\bfu$ from $(\lambda_1,\lambda_2)$.
First, we pad $\lambda_1$ with $m-|\lambda_1|$ zeroes to obtain $\bfell_1$.
Next, we reverse $\lambda_2$ and pad the result $m-|\lambda_2|$ zeroes at the front to obtain $\bfell_2$.
It is then easy to verify that $\bfu=\bfell_1-\bfell_2+\bfw$.

Finally, we assume that $m\ge t$ and we consider $(\lambda_1,\lambda_2)\in \bigcup_{r=0}^t \Lambda(r)$.
Since $\lambda_1$ and $\lambda_2$ are partitions of $t_1$ and $t_2$ with $t_1+t_2\le t$, 
the sum of their lengths $|\lambda_1|+|\lambda_2|$ is at most $t$.
Define $\bfell_1$ and $\bfell_2$ as in the preceding paragraph and 
we have that the support sets of $\bfell_1$ and $\bfell_2$ are disjoint.
Therefore, the tuple $\bfu=\bfell_1-\bfell_2+\bfw$ has non-increasing entries and belong to $\cP(m,L;\ws,t)$.
Since $\phi(\bfu)=(\lambda_1,\lambda_2)$, the map $\phi$ is surjective and therefore, bijective.
\end{proof}

\eqref{eq:Nt} then follows from the fact that $\phi$ is injective and the formula \eqref{eq:Lambda}.
To complete the proof of Proposition~\ref{prop:Nt}, we observe that $\phi$ is a bijection whenever $m\ge t$.

\vspace{2mm}
We next prove Propostion~\ref{prop:Nt-secc}.
Recall that $\cP(m,L;\bfv,t)\triangleq \{\bfu\in\cP(m,L): \sum_{i=1}^m |u_i-v_i|\le t\}$
and our task is to provide an upper bound on the size of $\cP(m,L;\bfv,t)$.

As before, 
we define the map $\phi_\bfv:\cP(m,L;\ws,t)\to \bigcup_{r=0}^t \Lambda(r)$.
For $\bfu\in \cP(m,L;\ws,t)$, set $(\ell_1,\ell_2,\ldots, \ell_m)=\bfu-\bfv$.
Then we find $i$ and $j$ such that 
$\ell_{i-1}>0$, $\ell_{j+1}<0$ and $\ell_{i}=\ell_{i+1}=\cdots=\ell_j=0$.
Finally, we let $\lambda_1=(\ell_1,\ell_2,\ldots, \ell_{i-1})$ and $\lambda_2=(-\ell_m,-\ell_{m-1},\ldots, -\ell_{j+1})$,
and set $\phi_\bfv(\bfu)=(\lambda_1,\lambda_2)$.
Following the proof for Lemma~\ref{lem:app}, we obtain the following lemma.

\begin{lemma}\label{lem:app2}
Let $\phi_\bfv:\cP(m,L;\bfv,t)\to \bigcup_{r=0}^t \Lambda(r)$ be defined as above.
Then $\phi_\bfv$ is a well-defined injective map.
\end{lemma}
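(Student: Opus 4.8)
The plan is to mimic almost verbatim the proof of Lemma~\ref{lem:app}, since the only change is that the fixed vector $\bfw=[w,w,\ldots,w]$ is replaced by an arbitrary non-increasing vector $\bfv=[v_1,\ldots,v_m]$. First I would check that $\phi_\bfv$ is well-defined: for $\bfu\in\cP(m,L;\bfv,t)$, writing $(\ell_1,\ldots,\ell_m)=\bfu-\bfv$, since both $\bfu$ and $\bfv$ are non-increasing\dots{} wait — here is exactly where the main obstacle lies. When we subtracted the \emph{constant} vector $\bfw$, the difference $\bfu-\bfw$ was automatically non-increasing, so its positive part formed a genuine partition $\lambda_1$ and (the negation of) its negative part a genuine partition $\lambda_2$. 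For general $\bfv$, the difference $\bfu-\bfv$ of two non-increasing vectors need \emph{not} be non-increasing, so the ``indices $i,j$ with $\ell_{i-1}>0$, $\ell_{j+1}<0$, $\ell_i=\cdots=\ell_j=0$'' may not even exist, and the would-be parts of $\lambda_1,\lambda_2$ need not be sorted. So the first real step is to observe that $\phi_\bfv$ as literally written is \emph{not} obviously well-defined, and to repair the definition: instead of reading off consecutive blocks of $\ell$, one should \emph{collect} the positive values $\{\ell_i:\ell_i>0\}$ and sort them in non-increasing order to form $\lambda_1$, and collect $\{-\ell_i:\ell_i<0\}$ and sort those to form $\lambda_2$. (In the constant-$\bfv$ case this agrees with the block description, so Lemma~\ref{lem:app} is unaffected.)

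With that repaired definition, well-definedness is immediate: $\lambda_1$ is a partition of $t_1:=\sum_{\ell_i>0}\ell_i$, $\lambda_2$ a partition of $t_2:=\sum_{\ell_i<0}(-\ell_i)$, and $t_1+t_2=\sum_i|\ell_i|=\sum_i|u_i-v_i|\le t$, so $(\lambda_1,\lambda_2)\in\bigcup_{r=0}^t\Lambda(r)$. Then $|\cP(m,L;\bfv,t)|\le|\bigcup_{r=0}^t\Lambda(r)|=N(t)$ will follow from injectivity, which is the one genuine point requiring the hypothesis that $\bfv$ is non-increasing. The key step is: if $\phi_\bfv(\bfu)=(\lambda_1,\lambda_2)$, can we recover $\bfu$? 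I would argue that $\bfu$ is the \emph{unique} non-increasing rearrangement compatible with the data, as follows. Since $\bfu$ is non-increasing and differs from $\bfv$ only in the positions counted by $\lambda_1$ (raised) and $\lambda_2$ (lowered), and since $\bfv$ itself is non-increasing, the positions where $\bfu>\bfv$ must be an initial-type segment structure relative to the level sets of $\bfv$ and the positions where $\bfu<\bfv$ a terminal-type segment structure — more precisely, within each maximal block of equal $\bfv$-entries, the raised coordinates occupy the top of the block and the lowered ones the bottom, and the amounts must be assigned greedily (largest increments to the earliest available slots, largest decrements to the latest) to keep $\bfu$ sorted. I would phrase this as: the multiset $\{u_i\}$ is determined (it is $\{v_i\}$ with the $\lambda_1$-increments and $\lambda_2$-decrements applied), and a multiset has exactly one non-increasing arrangement, hence $\bfu$ is determined. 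This gives injectivity.

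So the key steps, in order, are: (1) point out that the block-reading recipe for $\phi_\bfv$ only makes literal sense when $\bfu-\bfv$ is non-increasing, and replace it by the sort-the-positive-part / sort-the-negative-part recipe (noting this coincides with the original in the $\bfv=\bfw$ case); (2) verify well-definedness via $t_1+t_2\le t$; (3) prove injectivity by showing $\bfu$ is recovered as the unique non-increasing vector whose entry-multiset is that of $\bfv$ with the $\lambda_1$-increments and $\lambda_2$-decrements applied; (4) conclude $|\cP(m,L;\bfv,t)|\le N(t)$ from injectivity and \eqref{eq:Lambda}, and then $|\cP_{\rm col}(m,L;\ws,t)|\le|\cP_{\rm row}(m,L;\ws)|\cdot N(t)\le L^m N(t)$ for Proposition~\ref{prop:Nt-secc}. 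The main obstacle, to repeat, is step (1)/(3): without the constancy of $\bfv$ one loses the clean ``positive block then zero block then negative block'' structure, and one must instead work with the rearrangement/greedy-assignment argument to see that $\phi_\bfv$ is both well-defined and injective; note also that surjectivity genuinely fails in general (for instance when $\bfv$ is strictly decreasing the constraint $u_i\le L$ and $u_i\ge u_{i+1}$ can block some partitions), which is why the lemma only claims injectivity.
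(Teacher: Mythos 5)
You have correctly spotted the point that the paper's own one-line proof (``Following the proof for Lemma~\ref{lem:app}\ldots'') glosses over: for a non-constant $\bfv$ the difference $\bfu-\bfv$ of two non-increasing vectors need not be non-increasing, so the ``positive block, zero block, negative block'' recipe defining $\phi_\bfv$ does not literally make sense, and your repair (sort the positive entries to form $\lambda_1$, sort the negated negative entries to form $\lambda_2$) together with the $t_1+t_2\le t$ computation is the right way to get well-definedness.

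However, your injectivity step (3) has a genuine gap, and no argument can close it because the lemma is false as stated. The pair $(\lambda_1,\lambda_2)$ records only the \emph{sizes} of the increments and decrements, not which entries of $\bfv$ they are applied to; when $\bfv$ is non-constant, more than one assignment yields a valid non-increasing $\bfu$, so the multiset $\{u_i\}$ is not determined and your greedy/rearrangement claim breaks down. Concretely, take $m=4$, $L=3$, $t=1$, $\bfv=[3,3,3,2]$ (a row of the paper's own Example~\ref{exa-secc}). Then
\[
\cP(4,3;\bfv,1)=\bigl\{[3,3,3,2],\,[3,3,3,3],\,[3,3,2,2],\,[3,3,3,1]\bigr\},
\]
which has four elements while $N(1)=3$, so no injection into $\bigcup_{r=0}^{1}\Lambda(r)$ exists; under your repaired map, $[3,3,2,2]$ and $[3,3,3,1]$ both have a single difference entry equal to $-1$ and collide on $(\lambda_1,\lambda_2)=\bigl((\,),(1)\bigr)$, since the decrement may legitimately be applied either to a $3$-entry or to the $2$-entry of $\bfv$. (The same four nonzero entries appear in the row of $\bfM^*$ indexed by $[3,3,3,2]$ in Example~\ref{exa-secc}.) Consequently the bound $|\cP(m,L;\bfv,t)|\le N(t)$ in Proposition~\ref{prop:Nt-secc} also fails for non-constant $\bfv$. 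What does survive is the observation that $\bfu-\bfv$ restricted to each maximal constant block of $\bfv$ \emph{is} non-increasing, so the argument of Lemma~\ref{lem:app} applies blockwise and gives $|\cP(m,L;\bfv,t)|\le N(t)^{b}$, where $b\le L+1$ is the number of distinct values of $\bfv$; this is still independent of $m$ and can be used to restate the count of variables in \eqref{gsp:SECC}, but the lemma and proposition need to be corrected accordingly. Your closing remark that surjectivity fails is true but moot given the above.
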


Therefore, Propostion~\ref{prop:Nt-secc} follows from the fact that $\phi_\bfv$ is injective and the formula \eqref{eq:Lambda}.

\section{Verification of Optimality Certificates}\label{app:opt}

\begin{table*}
\begin{center}
\scriptsize
\renewcommand{\arraystretch}{1.6}
\begin{tabular}{|c|l|l|}
\hline
$m$ & $\widetilde{\bfY}^*$ & $\widetilde{\bfX}^*$\\
\hline
$m\equiv 0\pmod4$ &
$\widetilde{Y}^*_{\bfL(m-1)}=\frac{1}{mL}$, &
$\widetilde{X}^*_{\bfL(m)}=1$,~~
$\widetilde{X}^*_{\bfL(m-4i-3)}=\frac{1}{4i+3}\binom{m}{4i+2}L^{4i+2}$ ,
\\
& 
$\widetilde{Y}^*_{\bfL(m-4i-2)}=\widetilde{Y}^*_{\bfL(m-4i-3)}=\frac{1}{4i+4}$ for $0\le i< \floor{\frac m4}$ & 
$\widetilde{X}^*_{\bfL(m-4i-4)}=\frac{1}{4i+4}\left(\binom{m}{4i+3}L^{4i+3}-\frac{1}{4i+3}\binom{m}{4i+2}L^{4i+2}\right)$ for $0\le i<\floor{\frac m4}$ 
\\
\hline
$m\equiv 1\pmod4$ &
$\widetilde{Y}^*_{\bfL(m)}=1$, &
$\widetilde{X}^*_{\bfL(m)}=1$,~~
$\widetilde{X}^*_{\bfL(m-4i-4)}=\frac{1}{4i+4}\binom{m}{4i+3}L^{4i+3}$ ,
\\
& 
$\widetilde{Y}^*_{\bfL(m-4i-3)}=\widetilde{Y}^*_{\bfL(m-4i-4)}=\frac{1}{4i+5}$ for $0\le i< \floor{\frac m4}$ & 
$\widetilde{X}^*_{\bfL(m-4i-5)}=\frac{1}{4i+5}\left(\binom{m}{4i+4}L^{4i+4}-\frac{1}{4i+4}\binom{m}{4i+3}L^{4i+3}\right)$ for $0\le i<\floor{\frac m4}$ 
\\
\hline
$m\equiv 2\pmod4$ &
$\widetilde{Y}^*_{\bfL(m-4i)}=\widetilde{Y}^*_{\bfL(m-4i-1)}=\frac{1}{4i+2}$ for $0\le i\le \floor{\frac m4}$ & 
$\widetilde{X}^*_{\bfL(m-4i-1)}=\frac{1}{4i+1}\binom{m}{4i}L^{4i}$ ,
\\
& 
&
$\widetilde{X}^*_{\bfL(m-4i-2)}=\frac{1}{4i+2}\left(\binom{m}{4i+1}L^{4i+1}-\frac{1}{4i+1}\binom{m}{4i}L^{4i}\right)$ for $0\le i\le\floor{\frac m4}$ 
\\
\hline
$m\equiv 3\pmod4$ &
$\widetilde{Y}^*_{\bfL(m-4i-1)}=\widetilde{Y}^*_{\bfL(m-4i-2)}=\frac{1}{4i+3}$ for $0\le i\le \floor{\frac m4}$ & 
$\widetilde{X}^*_{\bfL(m-4i-2)}=\frac{1}{4i+2}\binom{m}{4i+1}L^{4i+1}$ ,
\\
& 
&
$\widetilde{X}^*_{\bfL(m-4i-3)}=\frac{1}{4i+3}\left(\binom{m}{4i+2}L^{4i+2}-\frac{1}{4i+2}\binom{m}{4i+1}L^{4i+1}\right)$ for $0\le i\le\floor{\frac m4}$ 
\\
\hline

\end{tabular}
\end{center}
Here, we use $\bfL(i)$ to denote the $m$-tuple $[\underbrace{L,L,\ldots,L}_{i\text{ times}},\underbrace{L-1,L-1,\ldots,L-1}_{m-i\text{ times}}]$.
\caption{List of Optimality Certificates for Proposition~\ref{prop:secc1}}
\label{table:secc1}
\end{table*}

\begin{table*}
\begin{center}
\footnotesize
\renewcommand{\arraystretch}{1.6}
\begin{tabular}{|c|l|l|}
\hline
$L-\ws$ & $\widetilde{\bfY}^*$ & $\widetilde{\bfX}^*$\\
\hline
$L-\ws\equiv 0\pmod4$ &
$\widetilde{Y}^*_{[L]}=1$, &
$\widetilde{X}^*_{[L]}=1$,~~
$\widetilde{X}^*_{[L-4i-3]}=\frac{1}{4i+3}\binom{L}{4i+2}$ ,
\\
& 
$\widetilde{Y}^*_{[L-4i-2]}=\widetilde{Y}^*_{[L-4i-3]}=\frac{1}{4i+4}$ for $0\le i< \floor{\frac{L-\ws}4}$ & 
$\widetilde{X}^*_{[L-4i-4]}=\frac{1}{4i+4}\left(\binom{L}{4i+3}-\frac{1}{4i+3}\binom{L}{4i+2}\right)$ for $0\le i<\floor{\frac m4}$ 
\\
\hline
$L-\ws\equiv 1\pmod4$ &
$\widetilde{Y}^*_{[L]}=1$, &
$\widetilde{X}^*_{[L]}=1$,~~
$\widetilde{X}^*_{[L-4i-4]}=\frac{1}{4i+4}\binom{L}{4i+3}$ ,
\\
& 
$\widetilde{Y}^*_{[L-4i-3]}=\widetilde{Y}^*_{[L-4i-4]}=\frac{1}{4i+5}$ for $0\le i< \floor{\frac{L-\ws}4}$ & 
$\widetilde{X}^*_{[L-4i-5]}=\frac{1}{4i+5}\left(\binom{L}{4i+4}-\frac{1}{4i+4}\binom{L}{4i+3}\right)$ for $0\le i<\floor{\frac{L-\ws}4}$ 
\\
\hline
$L-\ws\equiv 2\pmod4$ &
$\widetilde{Y}^*_{[L-4i]}=\widetilde{Y}^*_{[L-4i-1]}=\frac{1}{4i+2}$ for $0\le i\le \floor{\frac{L-\ws}4}$ & 
$\widetilde{X}^*_{[L-4i-1]}=\frac{1}{4i+1}\binom{L}{4i}$ ,
\\
& 
&
$\widetilde{X}^*_{[L-4i-2]}=\frac{1}{4i+2}\left(\binom{L}{4i+1}-\frac{1}{4i+1}\binom{L}{4i}\right)$ for $0\le i\le\floor{\frac{L-\ws}4}$ 
\\
\hline
$L-\ws\equiv 3\pmod4$ &
$\widetilde{Y}^*_{[L-4i-1]}=\widetilde{Y}^*_{[L-4i-2]}=\frac{1}{4i+3}$ for $0\le i\le \floor{\frac{L-\ws}4}$ & 
$\widetilde{X}^*_{[L-4i-2]}=\frac{1}{4i+2}\binom{L}{4i+1}$ ,
\\
& 
&
$\widetilde{X}^*_{[L-4i-3]}=\frac{1}{4i+3}\left(\binom{L}{4i+2}-\frac{1}{4i+2}\binom{L}{4i+1}\right)$ for $0\le i\le\floor{\frac{L-\ws}4}$ 
\\
\hline
\end{tabular}
\end{center}
\caption{List of Optimality Certificates for Proposition~\ref{prop:secc2}}
\label{table:secc2}
\end{table*}

We establish Propositions~\ref{prop:secc1} and \ref{prop:secc2} 
by providing the optimality certificates for the linear program \eqref{gsp:SECC}.
In Tables~\ref{table:secc1} and \ref{table:secc2}, we provide the optimality certificates 
$\left(\widetilde{\bfY}^*,\widetilde{\bfX}^*\right)$ for Propositions~\ref{prop:secc1} and~\ref{prop:secc2}, respectively.

In this appendix, we also provide a detailed verification for the case $w=L-1$, $m\ge L/2$, and $m\equiv 0\pmod{4}$.
We omit the detailed verification for the other cases as the verification process is similar.

For brevity, we adopt the following notations:
\begin{align*}
\bfL(i) &\triangleq [\underbrace{L,\ldots,L}_{i\text{ times}},\underbrace{L-1,\ldots,L-1}_{m-i\text{ times}}],\\
\bfK(i) &\triangleq [\underbrace{L,\ldots,L}_{i\text{ times}},\underbrace{L-1,\ldots,L-1}_{m-i-1\text{ times}}, L-2].
\end{align*}
We also write $\cP_{\rm row}(m,L;\ws)$ and $\cP_{\rm col}(m,L;\ws)$ 
as $\cP_{\rm row}$ and $\cP_{\rm col}$, respectively.

Observe that  $\cP_{\rm row}=\{ \bfL(i): 0\le i\le m\}$ and 
$\cP_{\rm col}=\{ \bfL(i): 0\le i\le m\}\cup \{ \bfK(i): 0\le i\le m-1\}$.
We now state explicitly the entries of $\bfM$. 
For $0\le i,j \le m$, we have that
\[ \bfM_{\bfL(i), \bfL(j)} = 
\begin{cases}
m-i,	& \mbox{if $i=j-1$},\\
1,	& \mbox{if $i=j$},\\
Li,	& \mbox{if $i=j+1$},\\
0, 	& \mbox{otherwise}.
\end{cases}
\]
For $0\le i\le m$ and $0\le j \le m-1$, we have that
\[ \bfM_{\bfL(i), \bfK(j)} = 
\begin{cases}
(L-1)(m-i),	& \mbox{if $i=j$},\\
0, 	& \mbox{otherwise}.
\end{cases}
\]
Next, we state explicitly the entries of $\bfO$.
\begin{align*}
|O_{\bfL(i)}| & = \binom{m}{i}L^{m-i}, &\mbox{for $0\le i\le m$},\\
|O_{\bfK(i)}| & = m\binom{m-1}{i}L^{m-i-1}\binom{L}{2}, &\mbox{for $0\le i\le m-1$}.
\end{align*}

We first verify property (i) for Definition~\ref{def:opt-cert}. In particular, we check that $\bfM^*\widetilde{\bfY}^*\ge \one$, or equivalently, $\sum_{\bfv\in \cP_{\rm col}}\bfM_{\bfL(m-i),\bfv}\widetilde{Y}^*_{\bfv}\ge 1$ for all $0\le i\le m$. Now, since all entries of $Y_\bfv$ are nonnegative, we have that
\[ \sum_{\bfv\in \cP_{\rm col}}\bfM_{\bfL(m-i),\bfv}\widetilde{Y}^*_{\bfv}\ge 
\sum_{j=i-1}^{i+1}\bfM_{\bfL(m-i),\bfL(m-j)}\widetilde{Y}^*_{\bfL(m-j)}.\]

Then we have the following cases.
\begin{itemize}
\item When $i=0$, the quantity on the righthand side is bounded below by 
\[\bfM_{\bfL(m),\bfL(m-1)}\widetilde{Y}^*_{\bfL(m-1)}=mL(1/mL)= 1.\]

\item When $i\equiv 0\pmod 4$, the quantity on the righthand side is bounded below by 
\[\bfM_{\bfL(m-i),\bfL(m-i+1)}\widetilde{Y}^*_{\bfL(m-i+1)}=i(1/i)= 1.\]

\item When $i\equiv 1\pmod 4$, the quantity on the righthand side is bounded below by 
\[\bfM_{\bfL(m-i),\bfL(m-i-1)}\widetilde{Y}^*_{\bfL(m-i-1)}=(m-i)L/(i+3)\ge 1.\]

\item When $i\equiv 2\pmod 4$, the quantity on the righthand side is bounded below by 
\[\bfM_{\bfL(m-i),\bfL(m-i-1)}\widetilde{Y}^*_{\bfL(m-i-1)}=(m-i)L/(i+2)\ge 1.\]

\item When $i\equiv 3\pmod 4$, the quantity on the righthand side is bounded below by
{\small 
\begin{align*}
&\bfM_{\bfL(m-i),\bfL(m-i+1)}\widetilde{Y}^*_{\bfL(m-i+1)}
+\bfM_{\bfL(m-i),\bfL(m-i)}\widetilde{Y}^*_{\bfL(m-i)}\\
&\hspace{47mm}=i/(i+1)+1/(i+1)=1.
\end{align*}
}
\end{itemize}

\vspace{2mm}

Next, we verify property (ii) for Definition~\ref{def:opt-cert}. 
In particular, we check that $\widetilde{\bfX}^*\bfM^*\le \bfO$, or equivalently, 
\begin{align*}
\sum_{\bfv\in \cP_{\rm row}}\widetilde{X}^*_{\bfv}\bfM_{\bfv,\bfL(m-i)}
& \le |O_{\bfL(m-i)}|
& \mbox{ for all $0\le i\le m$},\\ 
\sum_{\bfv\in \cP_{\rm row}}\widetilde{X}^*_{\bfv}\bfM_{\bfv,\bfK(m-i)}
& \le |O_{\bfK(m-i)}|
& \mbox{ for all $1\le i\le m$}.
\end{align*}
Now, let us focus on the column indexed by $\bfL(m-i)$ for $0\le i\le m$.
Recall that $|O_{\bfL(m-i)}|=\binom{m}{i}L^i$.
Since most entries of $\bfM$ are zero, we have that
\[\sum_{\bfv\in \cP_{\rm row}}\widetilde{X}^*_{\bfv}\bfM_{\bfv,\bfL(m-i)}
=\sum_{j=i-1}^{i+1}\widetilde{X}^*_{\bfL(m-j)}\bfM_{\bfL(m-j),\bfL(m-i)}.\]
Then we have the following cases.
\begin{itemize}
\item When $i=0$, the quantity on the righthand side is 
\[\widetilde{X}^*_{\bfL(m)}\bfM_{\bfL(m),\bfL(m)}=1= |O_{\bfL(m)}|.\]

\item When $i=1$, the quantity on the righthand side is 
\[\widetilde{X}^*_{\bfL(m)}\bfM_{\bfL(m),\bfL(m-1)}=1(mL)= |O_{\bfL(m-1)}|.\]

\item When $i\equiv 0\pmod{4}$ and $i\ne 0$, the quantity on the righthand side is 
{\small
\begin{align*}
&\widetilde{X}^*_{\bfL(m-i+1)}\bfM_{\bfL(m-i+1),\bfL(m-i)} + \widetilde{X}^*_{\bfL(m-i)}\bfM_{\bfL(m-i),\bfL(m-i)}\\
& \hspace{10mm} = \left(\frac{1}{i-1}\binom{m}{i-2}L^{i-2}\right)L(m-i+1) \\
& \hspace{13mm}+ \frac1i \left( \binom{m}{i-1}L^{i-1}-\frac{1}{i-1}\binom{m}{i-2}L^{i-2}\right)\\
& \hspace{10mm}\le \left(\frac{m-i+1}{i-1}\binom{m}{i-2} +\frac{1}{i}\binom{m}{i-1} \right)L^{i-1}\\
& \hspace{10mm}\le \binom{m}{i} L^{i}= |O_{\bfL(m-i)}|.
\end{align*}
}

\item When $i\equiv 1\pmod{4}$ and $i\ne 1$, the quantity on the righthand side is 
{\small
\begin{align*}
&\widetilde{X}^*_{\bfL(m-i+1)}\bfM_{\bfL(m-i+1),\bfL(m-i)} \\
& \hspace{2mm} = \frac1{i-1} \left( \binom{m}{i-2}L^{i-2}-\frac{1}{i-2}\binom{m}{i-3}L^{i-3}\right)L(m-i+1)\\
& \hspace{2mm}\le \frac{m-i+1}{i-1}\binom{m}{i-2}L^{i-1}\\
& \le \binom{m}{i} L^{i}= |O_{\bfL(m-i)}|.
\end{align*}
}

\item When $i\equiv 2\pmod{4}$, the quantity on the righthand side is 
\[ \widetilde{X}^*_{\bfL(m-i-1)}\bfM_{\bfL(m-i-1),\bfL(m-i)}=\left(\frac{1}{i}\binom{m}{i}\right)i= |O_{\bfL(m-i)}|.\]

\item When $i\equiv 3\pmod{4}$, the quantity on the righthand side is 
{\footnotesize
\begin{align*}
&\widetilde{X}^*_{\bfL(m-i)}\bfM_{\bfL(m-i),\bfL(m-i)} + \widetilde{X}^*_{\bfL(m-i-1)}\bfM_{\bfL(m-i-1),\bfL(m-i)}\\
& \hspace{1mm} = \frac{1}{i}\binom{m}{i-1}L^{i-1} 
+ \frac{1}{i+1} \left( \binom{m}{i}L^{i}-\frac{1}{i}\binom{m}{i-1}L^{i-1}\right)(i+1)\\
& \hspace{1mm}=\binom{m}{i} L^{i}=|O_{\bfL(m-i)}|.
\end{align*}
}
\end{itemize}

Next, we look at the column indexed by indexed by $\bfK(m-i)$ for $1\le i\le m$.
Recall that $|O_{\bfK(m-i)}| = m\binom{m-1}{i-1}L^{i-1}\binom{L}{2}$.
Since most entries of $\bfM$ are zero, we have that
\begin{align*}
&\sum_{\bfv\in \cP_{\rm row}}\widetilde{X}^*_{\bfv}\bfM_{\bfv,\bfK(m-i)}\\
&\hspace{10mm}=\widetilde{X}^*_{\bfL(m-i)}\bfM_{\bfL(m-i),\bfK(m-i)}=\widetilde{X}^*_{\bfL(m-i)}(L-1)i.
\end{align*}

Since $\widetilde{X}^*_{\bfL(m-i)}$ is nonzero only in certain instances, we consider the following.
\begin{itemize}
\item When $i=1$, the quantity on the righthand side is 
\[L-1< m\binom{L}{2}=|O_{\bfK(m-1)}|.\]

\item When $i\equiv 0\pmod{4}$ and $i\ne 0$, the quantity on the righthand side is 
{
\begin{align*}
&\frac1i \left( \binom{m}{i-1}L^{i-1}-\frac{1}{i-1}\binom{m}{i-2}L^{i-2}\right)(L-1)i\\
& \hspace{10mm}\le \left(\frac{1}{i}\binom{m}{i-1} L^{i-1} \right) (L-1)i\\
& \hspace{10mm}\le m\binom{m-1}{i-1} L^{i-1}\binom{L}{2}= |O_{\bfK(m-i)}|.
\end{align*}
}

\item When $i\equiv 3\pmod{4}$, the quantity on the righthand side is 
{
\begin{align*}
&\left(\frac1i \binom{m}{i-1}L^{i-1}\right)(L-1)i\\
& \hspace{10mm}\le m\binom{m-1}{i-1} L^{i-1}\binom{L}{2}= |O_{\bfK(m-i)}|.
\end{align*}
}\end{itemize}

\vspace{2mm}

Finally, we verify property (iii) for Definition~\ref{def:opt-cert}. Indeed, we have that
{\small
\begin{align*}
& \sum_{\bfu\in \cP_{\rm col}} |O_\bfu| \widetilde{Y}^*_\bfu\\
& = \frac{mL}{mL}+\sum_{i=0}^{\floor{m/4}-1} \frac{1}{4i+4}\left(\binom{m}{4i+2}L^{4i+2}+\binom{m}{4i+3}L^{4i+3}\right)\\
& = 1+\sum_{i=0}^{\floor{m/4}-1} \frac{1}{4i+4}\left(\binom{m}{4i+2}L^{4i+2}+\binom{m}{4i+3}L^{4i+3}\right);
\end{align*}
}

\noindent and 
{\footnotesize
\begin{align*}
&\sum_{\bfu\in \cP_{\rm row}} \widetilde{X}^*_\bfu\\
& = 1+\sum_{i=0}^{\floor{m/4}-1} \frac{1}{4i+3}\binom{m}{4i+2}L^{4i+2}+\\
&\hspace{20mm}\frac{1}{4i+4}\left(\binom{m}{4i+3}L^{4i+3}-\frac{1}{4i+3}\binom{m}{4i+2}L^{4i+2}\right)\\
& = 1+\sum_{i=0}^{\floor{m/4}-1} \frac{1}{4i+4}\left(\binom{m}{4i+2}L^{4i+2}+\binom{m}{4i+3}L^{4i+3}\right).
\end{align*}
}

\noindent Therefore, equality holds and we have verified the conditions for $(\widetilde{\bfY}^*,\widetilde{\bfX}^*)$ to be an optimality certificate.

\bibliographystyle{IEEEtran}
\bibliography{abrv,conf_abrv,mybibfile}
\end{document}